\setlist{noitemsep}
\definecolor{MyDarkBlue}{rgb}{0,0.08,0.45}
\definecolor{cites}{HTML}{324b13}
\definecolor{links}{HTML}{1a663b}
\definecolor{MyLightMagenta}{cmyk}{0.1,0.8,0,0.1}
\definecolor{sblue}{HTML}{0049A9}
\definecolor{scyan}{HTML}{CBEAFC}
\definecolor{sred}{HTML}{B5595C}
\definecolor{sgreen}{HTML}{609B57}
\definecolor{spink}{HTML}{FFB0FF}
\renewcommand{\epsilon}{\varepsilon}
\renewcommand{\phi}{\varphi}
\mathchardef\mathhyphen="2D
\renewcommand{\autoref}{\cref}
\newrobustcmd{\maybemarginnote}[2][]{%
  \ifbool{vmode}{%
    \@ifnextchar\par{%
      \leavevmode%
      {#2}%
    }{%
      \@ifnextchar\end{%
        \leavevmode%
        {#2}%
      }{%
        {%
          \iftoggle{maybemarginnote@reverse}{%
            \reversemarginpar%
            \global\togglefalse{maybemarginnote@reverse}%
          }{%
            \normalmarginpar%
            \global\toggletrue{maybemarginnote@reverse}%
          }%
          \marginnote{#1{#2}}%
        }%
        \ignorespaces%
      }%
    }%
  }{%
    {%
      \iftoggle{maybemarginnote@reverse}{%
        \reversemarginpar%
        \global\togglefalse{maybemarginnote@reverse}%
      }{%
        \normalmarginpar%
        \global\toggletrue{maybemarginnote@reverse}%
      }%
      \marginnote{#1{#2}}%
    }%
    \ignorespaces%
  }%
}
\definecolor{zorange}{HTML}{D3812F}
\definecolor{zred}{HTML}{C72D27}
\definecolor{zmagenta}{HTML}{A858D1}
\definecolor{zblue}{HTML}{6561C7}
\definecolor{zteal}{HTML}{4494A6}
\definecolor{zgreen}{HTML}{919D24}
\pgfmathsetmacro{\xMax}{5}
\pgfmathsetmacro{\yMax}{3}
\pgfmathsetmacro{\vRv}{1.5}
\pgfmathsetmacro{\vMean}{2}
\pgfmathsetmacro{\vInsp}{\vRv - \vMean * (1 - exp(-(\vRv/\vMean)))}
\pgfmathsetmacro{\vBv}{\vMean * ln(\vMean/\vInsp)}
\newcommand{\fInsp}[1]{(\vInsp + \vMean * (1 - exp(-((#1)/\vMean))))}
\pgfmathsetmacro{\vPHedge}{0.5}
\tikzset{%
  guide/.style={very thin, densely dashed, draw=black!25},
  axis/.style={semithick},
  outside/.style={draw=zmagenta},
  insp/.style={draw=zteal},
  grab/.style={draw=zorange},
  surrogate insp/.style={draw=zred},
  surrogate opt/.style={draw=zred},
  surrogate hedge/.style={draw=zred},
  font=\footnotesize,
  every plot/.style={domain=0:\xMax, samples=128},
}
\def\@testdef #1#2#3{%
  \def\reserved@a{#3}\expandafter \ifx \csname #1@#2\endcsname
 \reserved@a  \else
\typeout{^^JLabel #2 changed:^^J%
\meaning\reserved@a^^J%
\expandafter\meaning\csname #1@#2\endcsname^^J}%
\@tempswatrue \fi}
\newcommand{\mandator}{obligatory}
\newcommand{\optiona}{nonobligatory}
\newcommand{\Mandator}{Obligatory}
\newcommand{\amandator}{an obligatory}
\newcommand{\mand}{\ensuremath{\mathsf{OI}}}
\newcommand{\opt}{\ensuremath{\mathsf{NI}}}
\newcommand{\hedge}{\ensuremath{\mathsf{LH}}}
\newcommand{\akaobligatoryinspection}{\ifdefstring{\mandator}{obligatory}{\unskip}{(a.k.a. obligatory inspection)}}
\newcommand{\akanonobligatoryinspection}{\ifdefstring{\optiona}{nonobligatory}{\unskip}{(a.k.a. nonobligatory inspection)}}
\newcommand{\dm}{DM}
\newcommand{\iitem}{item}
\newcommand{\Item}{Item}
\newcommand{\items}{items}
\newcommand{\Items}{Items}
\newcommand{\itemindex}{\ensuremath{n}}
\newcommand{\numitems}{\ensuremath{N}}
\newcommand{\itemsset}{\ensuremath{{[N]}}}
\newcommand{\inspcost}{\ensuremath{c}}
\newcommand{\inspcostitem}[1][\itemindex]{\ensuremath{\inspcost_{#1}}}
\newcommand{\realizedcost}{\ensuremath{C}}
\newcommand{\realizedcostsurrogatebase}{\ensuremath{Z}}
\newcommand{\realizedcostsurrogate}{\ensuremath{\realizedcostsurrogatebase^\mand}}
\newcommand{\realizedcostsurrogateopt}{\ensuremath{\realizedcostsurrogatebase^\opt}}
\newcommand{\realizedcostsurrogatehedge}{\ensuremath{\realizedcostsurrogatebase^\hedge}}
\newcommand{\price}{\ensuremath{X}}
\newcommand{\priceitem}[1][\itemindex]{\ensuremath{\price_{#1}}}
\newcommand{\mean}{\ensuremath{\mu}}
\newcommand{\meanitem}[1][\itemindex]{\ensuremath{\mean_{#1}}}
\newcommand{\distribution}{\ensuremath{G}}
\newcommand{\distributionitem}[1][\itemindex]{\ensuremath{\distribution_{#1}}}
\newcommand{\constraints}{\ensuremath{\mathcal{F}}}
\newcommand{\filtration}{\ensuremath{\mathcal{I}}}
\newcommand{\reals}{\ensuremath{\mathbb{R}}}
\newcommand{\terminalcost}{\ensuremath{h}}
\newcommand{\policy}{\ensuremath{\pi}}
\newcommand{\rv}{\ensuremath{u^{\mathsf{rsv}}}}
\newcommand{\bv}{\ensuremath{u^{\mathsf{bkp}}}}
\newcommand{\phedgeitem}[1][\itemindex]{\ensuremath{p_{#1}}}
\newcommand{\phedge}{\ensuremath{p}}
\newcommand{\surrogatebase}{\ensuremath{W}}
\newcommand{\surrogate}{\ensuremath{\surrogatebase^\mand}}
\newcommand{\surrogateopt}{\ensuremath{\surrogatebase^\opt}}
\newcommand{\surrogatehedge}[1][\phedge]{\ensuremath{\surrogatebase^{\hedge(#1)}}}
\newcommand{\surrogateitem}[1][\itemindex]{\ensuremath{\surrogate_{#1}}}
\newcommand{\surrogatebaseitem}[1][\itemindex]{\ensuremath{\surrogatebase_{#1}}}
\newcommand{\surrogateoptitem}[1][\itemindex]{\ensuremath{\surrogateopt_{#1}}}
\newcommand{\surrogatehedgeitem}[1][\itemindex]{\ensuremath{\surrogatehedge[\phedge_{#1}]_{#1}}}
\newcommand{\optsurrogate}{\optiona\ inspection surrogate price}
\newcommand{\mandsurrogate}{\mandator\ inspection surrogate price}
\newcommand{\ratio}{\ensuremath{\alpha}}
\newcommand{\bestratio}{\ensuremath{\alpha^*}}
\newcommand{\approxcombo}{\ensuremath{\beta}}
\newcommand{\policyhedge}{\hedge}
\newcommand{\sur}{surrogate price}
\newcommand{\surs}{surrogate prices}
\newcommand{\inspecteditem}[1][\itemindex]{\ensuremath{\mathbb{I}_{#1}}}
\newcommand{\selecteditem}[1][\itemindex]{\ensuremath{\mathbb{S}_{#1}}}
\newcommand{\selectedset}{\ensuremath{\mathcal{S}}}
\newcommand{\outside}{\ensuremath{r}}
\newcommand{\Policies}{\ensuremath{\Pi}}
\newcommand{\hedgepi}{\ensuremath{{\policyhedge[\pi]}}}
\newcommand{\optpol}{\ensuremath{\Policies^\opt}}
\newcommand{\mandpol}{\ensuremath{\Policies^\mand}}
\newcommand{\policycost}[1][\policy]{\ensuremath{\realizedcost^{#1}}}
\newcommand{\optcost}{\ensuremath{\bar{\realizedcost}^\opt}}
\newcommand{\mandcost}{\ensuremath{\bar{\realizedcost}^\mand}}
\newcommand{\policycostone}[1][\policy]{\ensuremath{\realizedcost^{#1}_{\mathsf{\iitem}}}}
\newcommand{\optcostone}{\ensuremath{\bar{\realizedcost}^\opt_{\mathsf{\iitem}}}}
\newcommand{\mandcostone}{\ensuremath{\bar{\realizedcost}^\mand_{\mathsf{\iitem}}}}
\newcommand{\subproblem}{one-item subproblem}
\newcommand{\subproblems}{one-item subproblems}
\newcommand{\SurMin}[1][m]{R_{\neq #1}}
\newcommand{\SurMinWithout}[1][m]{P_{\neq #1}}
\newcommand{\SurMinWith}[1][m]{Q_{\neq #1}}
\title[Local hedging for Pandora's box with \optiona\ inspection]{Local hedging approximately solves Pandora's box problems with \optiona\ inspection}
\author[Ziv Scully and Laura Doval]{Ziv Scully\thanks{\vspace{-0.225in}Supported by NSF grant no. CMMI-2307008.} \\ Cornell University \and Laura Doval \\ Columbia Business School}
\date{}
\begin{document}

\maketitle

\vspace{0.075in}
\begin{abstract}
We consider search problems with \optiona\ inspection and single-item or combinatorial selection. A decision maker is presented with a number of \items, each of which contains an unknown price, and can pay an inspection cost to observe the \iitem's price before selecting it. Under single-item selection, the decision maker must select one \iitem; under combinatorial selection, the decision maker must select a set of items that satisfies certain constraints. In our \optiona\ inspection setting, the decision maker can select \items\ without first inspecting them. It is well-known that search with \optiona\ inspection is harder than the well-studied \mandator\ inspection case, for which the optimal policy for single-item selection \citep{weitzman1979optimal} and approximation algorithms for combinatorial selection \citep{singla2018price} are known.


We introduce a technique, \emph{local hedging}, for constructing policies with good approximation ratios in the \optiona\ inspection setting. Local hedging transforms policies for the \mandator\ inspection setting into policies for the \optiona\ inspection setting, at the cost of an extra factor in the approximation ratio.
The factor is instance-dependent but is at most~$4/3$. We thus obtain the first approximation algorithms for a variety of combinatorial selection problems, including matroid basis, matching, and facility location.
\end{abstract}

\vspace{-0.075in}
\tableofcontents
\newpage



\section{Introduction}\label{sec:intro}

In the now classic \emph{Pandora's box} problem of \citet{weitzman1979optimal}, a decision maker (henceforth, \dm) possesses $N$ \items, each of which contains an unknown price. \Items\ can be inspected sequentially, at a cost, in any order, and after search, the \dm\ selects one item (or, in generalizations to be discussed shortly, a set of items). The original Pandora's box model has \emph{\mandator\ inspection} \akaobligatoryinspection: when search stops, the \dm\ must select a previously inspected \iitem. In contrast, we consider the problem with \emph{\optiona\ inspection} \akanonobligatoryinspection\ \citep{guha2008information, doval2018whether, beyhaghi2019pandora, beyhaghi2023pandora, fu2023pandora}: when search stops, the \dm\ may select any item, regardless of whether it has been inspected.

The uncertainty in Pandora's box problems is stochastic. Specifically, each \iitem~$\itemindex$ has a known inspection cost~$\inspcostitem$, which is deterministic; and a hidden quantity~$\priceitem$, which is stochastic, drawn from a distribution known to the \dm\ but with unknown realization. Paying the \iitem's inspection cost $\inspcostitem$ reveals the realization of~$\priceitem$. One can consider expected reward maximization or expected cost minimization versions of this problem; we consider the cost-minimization version, and thus call $\priceitem$ the \iitem's \emph{hidden price}. That is, the \dm\ must pay the hidden price of the \iitem(s) they select.

Remarkably, the original \mandator\ inspection Pandora's box model admits an elegant optimal solution. \Items\ are assigned \emph{reservation prices} (see \autoref{eq:reserve} in \autoref{sec:lower-bound}); the \dm\ inspects \items\ in increasing order of their reservation prices; search stops when the minimum inspected price is lower than the minimum reservation price among uninspected \items. Each \iitem's reservation price can be computed ``locally'', with knowledge of that \iitem's hidden price distribution and inspection costs alone. \Citeauthor{weitzman1979optimal}'s policy thus breaks the curse of dimensionality, as the computation required scales only linearly with the number of boxes.

The simplicity of \citeauthor{weitzman1979optimal}'s policy has another benefit: \emph{compositionality}. Specifically, \citet{singla2018price} shows that one can compose \citeauthor{weitzman1979optimal}'s policy with (roughly) any greedy algorithm to solve \emph{combinatorial} Pandora's box problems. These are Pandora's box problems that involve selecting a \emph{set} of \items, not necessarily just one, satisfying some combinatorial constraint. For instance, the Pandora's box version of minimum spanning tree has each \iitem\ sit on an edge of a graph, each containing a hidden price, and the \dm's goal is to select a set of \items\ that forms a spanning tree while minimizing the expected sum of edge prices and inspection costs. \Citet{singla2018price} shows that if a greedy algorithm is a $\approxcombo$-approximation for the deterministic version of the problem, then composing it (in an appropriate sense) with \citeauthor{weitzman1979optimal}'s reservation price policy yields a $\approxcombo$-approximation for the Pandora's box version of the problem with \mandator\ inspection.

Unfortunately, the \optiona\ inspection variant of the Pandora's box problem does not inherit the simple optimal policy of its \mandator\ inspection cousin \citep{doval2018whether}. In particular, finding the optimal policy is known to be NP-hard \citep{fu2023pandora}. While recent work has shown the single-item selection problem admits a PTAS \citep{fu2023pandora, beyhaghi2023pandora}, the algorithm lacks the simplicity and compositionality of \citeauthor{weitzman1979optimal}'s policy for \mandator\ inspection. In particular, there are no approximation algorithms for combinatorial Pandora's box problems with \optiona\ inspection.

In this paper, we define a class of policies, called \emph{local hedging}, for Pandora's box problems with \optiona\ inspection. Local hedging is a \emph{randomized} variant of \citeauthor{weitzman1979optimal}'s reservation price policy. Crucially, local hedging maintains both the simplicity and compositionality of \citeauthor{weitzman1979optimal}'s policy. In the traditional single-item selection problem with cost minimization, local hedging yields a $\frac{4}{3}$-approximation. In the combinatorial setting, we show that for any greedy algorithm for which \citeauthor{singla2018price}'s result \citep{singla2018price} yields a $\approxcombo$-approximation, composing local hedging with that greedy algorithm yields a $\frac{4}{3} \approxcombo$-approximation. Local hedging thus gives the \emph{first approximation algorithms} for combinatorial Pandora's box problems with \optiona\ inspection.

\subsection{Key ideas}

At a high level, the complexity of the \optiona\ inspection Pandora's box problem comes from the fact that each uninspected \iitem\ has two available actions: the \dm\ can inspect it, or the \dm\ can select it without inspection. A natural idea to tame this complexity is to adopt what \citet{beyhaghi2019pandora} call a \emph{committing policy}. Such a policy labels each item as either \mandator-inspection or non-inspection, then obeys these labels. That is:
\* By labeling an \iitem\ as \mandator-inspection, the \dm\ commits to never selecting it unless they inspect it first.
\* By labeling an \iitem\ as non-inspection, the \dm\ commits to never inspecting it.
\*/
The key idea is that any policy obeying these labels is essentially solving \amandator\ inspection problem,\footnote{%
  If \iitem~$\itemindex$ is labeled non-inspection, it is equivalent to \amandator\ inspection \iitem\ with inspection cost~$0$ and hidden price that is deterministically~$\E{\priceitem}$, because the objective is \emph{expected} cost minimization.}
so the results of \citet{weitzman1979optimal} and \citet{singla2018price} can be applied.

It has been shown that in the reward maximization version of single-item selection under \optiona\ inspection, there is a committing policy that achieves a constant-factor approximation \citep{guha2008information, beyhaghi2019pandora}. However, this result is nonconstructive. The resulting algorithm simply evaluates the expected value of all committing policies, then chooses the best one. This is computationally feasible for single-item selection: it is clear that one should never label more than one \iitem\ as non-inspection, so there are only $N + 1$ committing policies to consider. But it is not computationally feasible for combinatorial selection problems. For instance, for $k$-item selection, there are $O(N^k)$ committing policies to consider.

Local hedging is a \emph{randomized} committing policy with an \emph{explicit construction}. Based on each \iitem's inspection cost, mean hidden price, and reservation price, local hedging determines an item-specific hedging probability~$\phedgeitem$. This is ``local'' in the sense that, like \citeauthor{weitzman1979optimal}'s reservation prices, the hedging probabilities do not depend on any other \items' parameters. Under local hedging, the \dm\ simply labels \iitem~$\itemindex$ as \mandator\ inspection with probability~$\phedgeitem$ independently across \items. The \dm\ then applies an appropriate \mandator\ inspection policy, namely the optimal policy from \citet{weitzman1979optimal} for single-item selection, or an approximation algorithm from \citet{singla2018price} for combinatorial selection. We show in \cref{theorem:lh-pandora, theorem:lh-combinatorial} that, roughly speaking, local hedging's randomized commitment inflates the approximation ratio by at most~$\frac{4}{3}$. The $\frac{4}{3}$ is actually a conservative bound on an instance-dependent factor, explained in more detail below.

There are two main technical challenges to defining and analyzing local hedging:
\* Deciding each \iitem's hedging probability.
\* Comparing local hedging's expected cost to that of the intractable optimal policy.
\*/
We solve both problems by defining a new approximation notion, which we call \emph{local approximation} (\cref{definition:local-ratio-approx}). This approximation notion applies to the \emph{\subproblem} in which the \dm\ possesses an \iitem\ and a known outside option, and must select between them. For a fixed hedging probability, local hedging yields a local $\alpha$-approximation if, loosely speaking, randomly committing in the \subproblem\ yields a lower cost than the optimal policy in an inflated \subproblem, in which the inspection cost and hidden price are inflated by a factor of~$\alpha$.


With the notion of local approximation in hand, we provide three main results for single-item selection.
\* We prove a tractable lower bound on the optimal expected cost (\cref{theorem:cost-opt}).\footnote{%
  As we discuss in \cref{sec:literature}, the single-item version of our lower bound is an instance of \emph{Whittle's integral bound} for bandit superprocesses \citep{whittle1980multi, brown2013optimal, aouad2020pandora}. To the best of our knowledge, the extension to the combinatorial setting (\cref{theorem:combinatorial-cost-opt}) is new.}
\* We prove that any \iitem, no matter its inspection cost or hidden price distribution, has a hedging probability that yields a local $\ratio$-approximation for some $\ratio \leq \frac{4}{3}$ (\cref{theorem:lh}).
\* We prove that if all \items\ admit a local $\ratio$-approximation, then local hedging with the associated hedging achieves expected cost at most $\ratio$ times the aforementioned lower bound (\cref{theorem:lh-pandora}).
\*/
The end result is that local hedging is a $(\max_{\itemindex \in [N]} \ratio_\itemindex)$-approximation for single-item selection, where $\ratio_\itemindex$ is the best local approximation ratio achievable for \iitem~$\itemindex$. Our combinatorial results follow the same outline. In fact, only the first and third steps need to change (\cref{theorem:combinatorial-cost-opt, theorem:lh-combinatorial}).

\subsection{Related Literature}\label{sec:literature} Because of its relevance to fundamental applications ranging from innovation to search in online markets,  \cite{weitzman1979optimal} initiated a vast literature studying the problem of sequential inspection  in  Economics, Marketing, Computer Science, and Operations Research (see the surveys by \cite{armstrong2017ordered}, \cite{ursu2023sequential}, and \cite{beyhaghi2023recent}, and \cite{derakhshan2022product} for a recent application to product rankings). Since its inception, several variations to the model have been considered. For instance, \cite{klabjan2014attributes} consider the case of multiple attributes;  \cite{chawla2020pandora} and \cite{gergatsouli2023weitzman} consider the case in which the boxes' contents are correlated; \cite{singla2018price}, \cite{gupta2019markovian}, \cite{boodaghians2020pandora}, \cite{gergatsouli2022online}, and \cite{aminian2023fair} study sequential inspection under various constraints; \cite{hoefer2021stochastic} and \cite{bhaskara2022online} consider the case in which each box can be probed multiple times. Of particular relevance to our work is the aforementioned result of \citet{singla2018price}, who considers selecting multiple items with combinatorial constraints on admissible selection sets. This result was later generalized by \citet{gupta2019markovian} to models where \iitem\ inspection is not an atomic operation, but rather a multi-stage process (see also \citet[Appendix~G]{kleinberg2016descending} and \citet{aouad2020pandora}).

Out of all these variations, the case of nonobligatory inspection has recently captured the attention of researchers in these areas, starting from the work of \cite{guha2008information} and \cite{doval2018whether}. Whereas \cite{doval2018whether} focuses on properties of optimal policies, \cite{guha2008information} provides a 0.8-approximation algorithm. \cite{beyhaghi2019pandora} show that a class of committing policies provides a 0.63-approximation algorithm.  \cite{fu2023pandora} show that computing an optimal policy is NP-Hard. Relying on a new structural property, \cite{fu2023pandora} and \cite{beyhaghi2023pandora} provide polynomial time approximation schemes that for any $\epsilon>0$ compute policies with an expected payoff of at least $(1-\epsilon)$ of the optimal.

Our contribution to the \optiona\ inspection literature is threefold. First, whereas most of the literature focuses on the single-item-rewards case, our local-hedging policy accommodates both costs and combinatorial selection as in \cite{singla2018price}. Second, whereas the aforementioned PTAS results rely on structural properties of the optimal policy for single-item selection under \optiona\ inspection, local hedging only relies on the properties of the much simpler \mandator\ inspection case. It follows that contrary to the existing results, local hedging relies only on calculating two numbers for each box: a reservation value and a hedging probability. Finally, as we discuss in \autoref{sec:conclusions}, local hedging could potentially extend beyond the Pandora's box model with \optiona\ inspection to other so-called \emph{Markovian bandit superprocess} problems \citep{whittle1980multi, glazebrook1982sufficient, gittins2011multi}. Specifically, our notion of local approximation (\cref{definition:local-ratio-approx}), when appropriately generalized, is a relaxation of a condition from the superprocess literature \citep{glazebrook1982sufficient}, and we expect our main lower and upper bound theorems to similarly generalize.

In terms of technical approach, our work is closest to that of \citet{beyhaghi2019pandora}. As previously noted, local hedging is a randomization over the class of committing policies considered by \citet{beyhaghi2019pandora}. Moreover, \citet{beyhaghi2019pandora} also prove a lower bound on the expected optimal cost of single-item selection. Our bound (\cref{theorem:cost-opt}) is always less than theirs, but it is more explicit. See \cref{appendix:beyhaghi} for a detailed comparison. We emphasize, however, that \citet{beyhaghi2019pandora} do not consider the combinatorial case.

Our single-item lower bound is an instance of \emph{Whittle's integral bound}, which is a method of bounding the optimal performance in bandit superprocess problems \citep{whittle1980multi, brown2013optimal, aouad2020pandora}. The closest to ours is an upper bound in \citet[Lemma~1]{aouad2020pandora}, which is an upper bound for a variant of the Pandora's box problem with multiple stages of inspection for each item. Their bound is a translation of a result of \citet[Proposition~4.2]{brown2013optimal} (who build upon \citet[Section~5]{whittle1980multi}) from the discounted bandit superprocess setting to an undiscounted Pandora's-box-type setting. Our single-item lower bound is another translation of the same, with some minor differences (e.g. we treat minimization vs. prior work treating maximization). As such, relative to prior work on Whittle's integral bound, our main contribution is extending it to the combinatorial setting. We also give the bound an elegant probabilistic interpretation of the bound in the ``capped value'' style that is a hallmark of the Pandora's box setting \citep{kleinberg2016descending, beyhaghi2019pandora}.

\subsection{Organization}

The rest of the paper is organized as follows. \autoref{sec:model} introduces the single-item selection model that we use to introduce our results. \autoref{sec:lower-bound} proves a lower bound on the expected cost of the optimal policy (\cref{theorem:cost-opt}). \autoref{sec:local-hedging} introduces local hedging and gives an upper bound on its expected cost (\autoref{theorem:lh-pandora}). \autoref{sec:combinatorial} covers the combinatorial case, introducing the combinatorial-inspection model and extending both our lower and upper bounds (\cref{theorem:combinatorial-cost-opt, theorem:lh-combinatorial}). \autoref{sec:conclusions} concludes with a discussion of how local hedging might extend to settings with reward maximization, as well as settings with more general Markovian bandit superprocesses.

\section{Model}\label{sec:model}

We introduce the single-item selection model, which we then use to state our main results. Though our results hold for more general combinatorial selection problems, the single-item selection model allows us to present the most streamlined version of our results that best emphasizes the key intuition. \Cref{sec:combinatorial} extends our model and results to combinatorial selection.

\paragraph{Single-\iitem\ \optiona\ inspection} A decision maker (\dm) possesses \numitems\ \items, indexed by $\itemindex\in\itemsset=\{1,\dots,\numitems\}$. Each \iitem\ \itemindex\ contains an unknown price, \priceitem, distributed according to \distributionitem, with mean value \meanitem.  The distributions $\{\distributionitem:\itemindex\in\itemsset\}$ are independent.
To observe \iitem\ \itemindex's price, the \dm\ must pay an inspection cost $\inspcostitem\geq0$.

We refer to the tuple $\{(\distributionitem,\inspcostitem):\itemindex\in\itemsset\}$ as an \emph{instance}.
To clarify, the \dm\ knows the instance, but does not know the realization of each $\priceitem \sim \distributionitem$ until after inspecting \iitem~$\itemindex$.


The \dm's task is to adaptively inspect a set of items, paying the inspection costs of each, then ultimately select a single \iitem\ by paying its hidden price. The \dm's goal is to minimize the expected total cost $\E{C}$ of this process, where the total cost~$C$ is given by
\[\label{eq:total-cost}\tag{TC}
  \realizedcost
  = \sum_{\itemindex\in\itemsset} \gp{\selecteditem \priceitem + \inspecteditem \inspcostitem}.
\]
In \cref{eq:total-cost}, \selecteditem\ and \inspecteditem\ are the indicators that \itemindex\ is selected and inspected, respectively.\footnote{We follow the notation in \cite{kleinberg2016descending} and \cite{beyhaghi2019pandora}.} Because the \dm\ must select an item, $\sum_{\itemindex\in\itemsset} \selecteditem = 1$. Note, however, that we do not impose that $\selecteditem \leq \inspecteditem$ and hence, we allow the \dm\ to select an \iitem\ without having first inspected its contents. In other words, our model corresponds to Pandora's box with \optiona\ inspection \citep{guha2008information,doval2018whether}.

Letting \optpol\ denote the set of all (adaptive) policies in the \optiona\ inspection problem, the \dm\ must choose a policy $\policy\in\optpol$ to minimize $\E{\policycost}$, where we let $\policycost$ denote the cost induced by policy~\policy. We additionally denote the optimal expected cost by
\[
  \label{eq:min-optional}\tag{OPT}
  \optcost = \min_{\policy\in\optpol} \E{\policycost}.
\]


\paragraph{\Mandator\ inspection} If, instead, we assume that the \dm\ can only take an \iitem\ it has already inspected (that is, $\selecteditem\leq\inspecteditem$), the above model corresponds to the Pandora's box model in \cite{weitzman1979optimal}. In what follows, we refer to this model as the \mandator\ inspection model and we let \mandpol\ denote the set of policies available to the \dm\ under \mandator\ inspection. Mirroring \cref{eq:min-optional}, we let $\mandcost = \min_{\policy\in\mandpol} \E{\policycost}$.


\begin{remark}[Notational conventions]\label{remark:notation}
We collect in one place our notational conventions. To indicate that we refer to the \optiona\ or \mandator\ inspection models, we label variables with \opt\ or \mand. For instance, $\optpol$ and $\mandpol$ denote the admissible policies under the \optiona\ inspection and \mandator\ inspection models, respectively.
While $\policycost$ and other notations defined later depend on the instance $\{(\distributionitem,\inspcostitem):\itemindex\in\itemsset\}$, we suppress this dependence from on our notation.
\end{remark}

\section{A lower bound on the optimal cost}\label{sec:lower-bound}

In this section, we state and prove \autoref{theorem:cost-opt}, which provides a lower bound on the cost of the optimal policy for the \optiona\ inspection problem. Our lower bound is expressed in terms of each \iitem's \emph{surrogate prices}, defined in \autoref{definition:surrogate}. In order to define surrogate prices, it is helpful to consider a special case of the problem in which the \dm\ is choosing between one \iitem\ and an outside option of known value. We begin with this special case, which we call the \emph{\subproblem}.

Throughout, we focus primarily on \optiona\ inspection, but we discuss \mandator\ inspection when explaining important background, or when it will be important for our later analysis. All of the results for \mandator\ inspection are standard.

\subsection{The \subproblem}
\label{sec:subproblem-definition}
Consider the case in which the \dm\ has a single \iitem\ and an outside option of known value, $\outside\in \mathbb{R}$. One can think of the outside option as a second item that has inspection cost~$0$ and deterministic hidden price~$\outside$. In what follows, to simplify notation, we omit the index \itemindex\ from our notation: the single \iitem\ has hidden price $\price \sim \distribution$ and inspection cost~$\inspcost$.

We denote the expected cost of using a policy~$\policy$ for the \subproblem\ by $\policycostone(\outside)$, and let $\optcostone(\outside) = \min_{\pi \in \optpol} \E{\policycostone(\outside)}$. With that said, for any given \subproblem, there are only three policies that could possibly be optimal:
\*[(a)] Select the outside option. This costs~$\outside$.
\* Inspect the \iitem, then select the better between the \iitem's hidden price~$\price$ and the outside option~$\outside$. This costs $c + \E{\min\{\price, \outside\}}$ in expectation.
\* Select the \iitem\ without inspection. This costs~$\mean$ in expectation.
\*/
The optimal cost for the \subproblem\ is achieved by picking the best among these three, so
\[
  \label{eq:subproblem_cost}
  \optcostone(\outside)
  = \min_{\policy \in \optpol} \E{\policycostone(\outside)}
  = \min\curlgp[\big]{c + \E{\min\{\price, \outside\}}, \outside, \mean}.
\]
It is intuitive that (a) is best for small~$\outside$ and that (c) is best for large~$\outside$, with (b) best at intermediate values. \Citet[Proposition~0]{doval2018whether} formalizes this intuition, characterizing the values of~$\outside$ for which each is optimal (see also \citealp{guha2008information}). We restate the result below, adapting it from rewards to costs.





\begin{lemma}[Proposition 0 in \citealp{doval2018whether}]\label{lemma:one-box}
  Define an \iitem's \emph{reservation price}~$\rv$ and \emph{backup price}~$\bv$ implicitly as follows:
  \begin{align}
    \E_{\distribution}{(\rv - \price)^+} &= \inspcost,\tag{RP}\label{eq:reserve}\\
    \E_{\distribution}{(\price - \bv)^+} &= \inspcost.\tag{BP}\label{eq:backup}
  \end{align}
  For all $\outside\in \mathbb{R}$, the optimal policy in the \subproblem\ is as follows. If $\rv\geq\bv$, then the \dm\ selects the \iitem\ without inspection. Instead, if $\rv<\bv$, the \dm\
  \begin{itemize}
  \item takes the outside option \outside\ if $\outside\leq\rv$,
  \item selects the \iitem\ without inspection if $\outside\geq\bv$, and
  \item otherwise inspects the \iitem\ and selects whatever is best between the \iitem's price \price\ and the outside option \outside.
  \end{itemize}
\end{lemma}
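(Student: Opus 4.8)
The plan is to reduce the statement to an elementary comparison of three functions of the outside cost~$\outside$. As observed just before the statement (equation \eqref{eq:subproblem_cost}), the optimal cost is $\optcostone(\outside) = \min\{f_a(\outside),\, f_b(\outside),\, f_c(\outside)\}$, where $f_a(\outside) = \outside$, $f_b(\outside) = \inspcost + \E{\min\{\price,\outside\}}$, and $f_c(\outside) = \mean$ are the expected costs of policies~(a),~(b),~(c); so it suffices to identify, for each~$\outside$, which of $f_a,f_b,f_c$ is pointwise smallest. Ties merely make the \dm\ indifferent, which is consistent with the weak inequalities in the statement.

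The key step is to rewrite the pairwise differences using $\E{\min\{\price,\outside\}} = \outside - \E{(\outside-\price)^+} = \mean - \E{(\price-\outside)^+}$:
\[
  f_b(\outside) - f_a(\outside) = \inspcost - \E{(\outside-\price)^+},\qquad
  f_b(\outside) - f_c(\outside) = \inspcost - \E{(\price-\outside)^+},\qquad
  f_a(\outside) - f_c(\outside) = \outside - \mean.
\]
Since $\outside \mapsto \E{(\outside-\price)^+}$ is continuous, non-decreasing, and equal to~$\inspcost$ exactly at $\outside = \rv$ by~\eqref{eq:reserve}, we get $f_b \geq f_a \iff \outside \leq \rv$; symmetrically $\outside \mapsto \E{(\price-\outside)^+}$ is continuous, non-increasing, and equal to~$\inspcost$ at $\outside = \bv$ by~\eqref{eq:backup}, so $f_b \geq f_c \iff \outside \geq \bv$; and trivially $f_a \geq f_c \iff \outside \geq \mean$. (When $\inspcost > 0$ neither expectation can be locally constant at the positive height~$\inspcost$, so $\rv$ and~$\bv$ are unambiguous; $\inspcost = 0$ is a harmless limiting case.) In particular, policy~(b) is weakly optimal exactly on $\rv \leq \outside \leq \bv$, which is nonempty iff $\rv \leq \bv$.

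It remains to order $\rv$, $\mean$, $\bv$. Taking expectations in the identity $(\outside-\price)^+ - (\price-\outside)^+ = \outside - \price$ gives $\E{(\outside-\price)^+} - \E{(\price-\outside)^+} = \outside - \mean$; setting $\outside = \mean$ shows $\E{(\mean-\price)^+} = \E{(\price-\mean)^+}$, and I denote this common value by~$\kappa$. Comparing with~\eqref{eq:reserve}, \eqref{eq:backup} and the monotonicities above, $\rv \geq \mean \iff \kappa \leq \inspcost \iff \bv \leq \mean$, whence $\rv \geq \bv \iff \kappa \leq \inspcost$; thus the conditions $\rv \geq \bv$, $\rv \geq \mean$, $\bv \leq \mean$ are equivalent and reverse together. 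Now case split. If $\rv < \bv$ (so $\rv < \mean < \bv$): for $\outside \leq \rv$ we have $f_a \leq f_b$ and $f_a < f_c$ (as $\outside < \mean$), so~(a); for $\rv \leq \outside \leq \bv$ we have $f_b \leq f_a$ and $f_b \leq f_c$, so~(b); for $\outside \geq \bv$ we have $f_c \leq f_b$ and $f_c < f_a$ (as $\outside > \mean$), so~(c)---precisely the three bullets. If $\rv \geq \bv$ (so $\bv \leq \mean \leq \rv$): the interval above is empty, so~(b) is never strictly optimal ($f_b \geq f_a$ when $\outside \leq \mean \leq \rv$, and $f_b \geq f_c$ when $\outside \geq \mean \geq \bv$); hence the \dm\ never inspects and optimally takes the cheaper of~$\outside$ and~$\mean$---in particular, it selects the \iitem\ without inspection whenever it selects it at all.

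I expect the only real obstacle to be careful bookkeeping rather than conceptual difficulty: confirming that $\E{(\outside-\price)^+}$ and $\E{(\price-\outside)^+}$ cannot be locally constant at height $\inspcost > 0$ (so the ``$\iff$'' characterizations of~$\rv$ and~$\bv$ remain valid even when $\distribution$ has atoms), handling $\inspcost = 0$ cleanly, and nailing down the three-way equivalence $\rv \geq \bv \iff \rv \geq \mean \iff \bv \leq \mean$ that organizes the case analysis. Because everything is phrased with weak inequalities, boundary ties need no separate treatment.
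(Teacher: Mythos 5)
Your proof is correct. The paper itself states this lemma without proof, attributing it to \citet[Proposition~0]{doval2018whether} and gesturing at the argument only through \cref{fig:one-box} and the surrounding discussion of indifference points; your pointwise comparison of the three cost curves $f_a(\outside)=\outside$, $f_b(\outside)=\inspcost+\E{\min\{\price,\outside\}}$, $f_c(\outside)=\mean$ is exactly the argument that figure depicts, filled in with the correct monotonicity and the key three-way equivalence $\rv\geq\bv\iff\rv\geq\mean\iff\bv\leq\mean$ (obtained neatly via $\E{(\mean-\price)^+}=\E{(\price-\mean)^+}$). One small but worthwhile observation: for the stated conclusion (weak optimality of the indicated action in each region) you only need the easy direction of each ``$\iff$,'' i.e.\ weak monotonicity of $\outside\mapsto\E{(\outside-\price)^+}$ and $\outside\mapsto\E{(\price-\outside)^+}$ on either side of $\rv$ and $\bv$; the non-local-constancy argument for $\inspcost>0$ is needed only to pin down $\rv$ and $\bv$ as well-defined solutions of \eqref{eq:reserve} and \eqref{eq:backup}, not for the optimality claim itself. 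Your reading of the $\rv\geq\bv$ branch---that inspection is never strictly optimal, so the \dm\ selects the \iitem\ without inspection whenever it selects the \iitem\ at all, while still possibly preferring the outside option when $\outside<\mean$---is the right interpretation of the lemma's terse phrasing.
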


\begin{figure}
  \centering
  \begin{tikzpicture}
  \draw[guide] (\vRv, 0) node[below] {\strut$\rv$} -- (\vRv, \vRv);
  \draw[guide] (0, \vRv) node[left] {$\rv$} -- (\vRv, \vRv);
  \draw[guide] (\vBv, 0) node[below] {\strut$\bv$} -- (\vBv, \vMean);
  \draw[guide] (0, \vMean + \vInsp) node[left] {$\mean + \inspcost$} -- (\xMax, \vMean + \vInsp);

  \draw[axis, ->] (0, 0) -- (\xMax, 0) node[right] {$r$};
  \draw[axis, ->] (0, 0) -- (0, \yMax);

  \draw[ultra thick, densely dotted, outside] (0, 0) -- (\yMax, \yMax);
  \draw[ultra thick, dashed, grab] (0, \vMean) node[left] {$\mean$} -- (\xMax, \vMean);
  \node[left] at (0, \vInsp) {$\inspcost$};
  \draw[ultra thick, insp] plot (\x, {\fInsp{\x}});

  \draw[ultra thick, insp] (\xMax + 1, {\yMax/2 - 0.1 + 0.7}) -- ++(1, 0) node[right] {$\inspcost + \E{\min\{\price, r\}}$\quad(inspect \iitem)};
  \draw[ultra thick, dashed, grab] (\xMax + 1, {\yMax/2 - 0.1}) -- ++(1, 0) node[right] {$\mean$\quad(select \iitem\ without inspection)};
  \draw[ultra thick, densely dotted, outside] (\xMax + 1, {(\yMax/2 - 0.1) - 0.7}) -- ++(1, 0) node[right] {$r$\quad(select outside option)};
\end{tikzpicture}
  \caption{Illustration of the expected cost in the \subproblem\ given each of the three possible first actions. The reservation and backup prices are the values of~$\outside$ that cause indifference between inspecting and the other two actions.}
  \label{fig:one-box}
\end{figure}
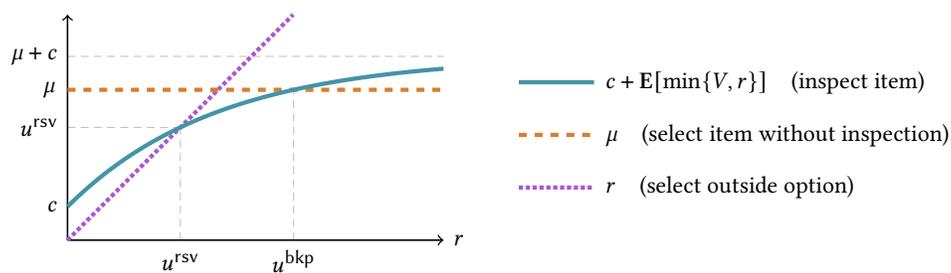

The \iitem's reservation price is the value of the outside option that makes the \dm\ indifferent between taking the outside option and inspecting the \iitem\ (cf. \citealp{weitzman1979optimal}). Similarly, the \iitem's backup price is the value of the outside option that makes the \dm\ indifferent between inspecting the \iitem\ and selecting it without inspection (cf. \citealp{doval2018whether}). See \cref{fig:one-box} for an illustration. The optimal policy for the \subproblem\ follows from these indifference properties.

One can also study the \subproblem\ with \mandator\ inspection. We write $\mandcostone(\outside) = \min_{\policy \in \mandpol} \E{\policycostone(\outside)}$ for the optimal expected cost in this setting. It is a standard result that $\mandcostone(\outside) = \min\{\inspcost + \E{\min\{\price, \outside}, r\}$, analogous to~\cref{eq:subproblem_cost}.

\subsection{Surrogate prices}
Having described the \subproblem, it remains to relate its properties back to the full single-item selection problem with multiple items. We do so by way of \emph{surrogate prices}, defined below. Surrogate prices give a convenient way of characterizing the optimal cost not just in the \subproblem\ (see \cref{lemma:min-r-surrogate}), but also the full single-item selection problem (see \cref{theorem:cost-opt}).


\begin{definition}[Surrogate prices]\label{definition:surrogate}\label{definition:surrogate-optional}
An \iitem's \emph{\mandsurrogate} (henceforth, \mand-\sur), denoted by \surrogate, is the random variable
\begin{align}\label{eq:surrogate-mand}
  \surrogate
  = \max\{\price,\rv\}.\tag{\mand-S}
\end{align}
An \iitem's \emph{\optsurrogate} (henceforth, \opt-\sur), denoted by \surrogateopt, is the random variable
\begin{align}\label{eq:surrogate-opt}
  \surrogateopt
  = \begin{cases}
      \min\{\surrogate, \bv\} & \textnormal{if } \rv < \bv \\
      \mean & \textnormal{if } \rv\geq \bv.
    \end{cases}
\tag{\opt-S}
\end{align}
\end{definition}
The \iitem's \mand-\sur\ captures that in order to obtain the price $\price$, the \dm\ must first inspect the item. By inflating the \iitem's cost up to its reservation price, the \mand-\sur\ \surrogateitem\ internalizes the \iitem's inspection cost. The \opt-\sur\ additionally deflates the \iitem's price down to its backup price \bv. This internalizes that under \optiona\ inspection, the inspection cost may not be paid after all.

Because an \iitem's surrogate price internalizes the inspection cost, we may express the expected optimal cost of the \subproblem\ using the expectation of a \emph{one-shot} choice between the surrogate price and the outside option. This is a standard result in the \mandator\ inspection setting. We extend it to the \optiona\ inspection setting in \cref{lemma:min-r-surrogate} below.

\restatably
\begin{lemma}[Surrogate prices solve one-\iitem\ \optiona\ inspection]\label{lemma:min-r-surrogate}
  For all $\outside \in \reals$,
  \[
    \mandcostone(\outside) = \min\{\inspcost + \E{\min\{\price, \outside\}, \outside}\} &= \E{\min\{\surrogate, \outside\}}, \\
    \optcostone(\outside) = \min\{\inspcost + \E{\min\{\price, \outside\}, \outside, \mean}\} &= \E{\min\{\surrogateopt, \outside\}}.
  \]
\end{lemma}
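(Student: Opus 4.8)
The plan is to prove the two displayed identities separately, but by the same method: show that the right-hand side equals the explicit min-expression in the middle (which was already derived in the text from \eqref{eq:subproblem_cost} for the \opt\ case and asserted as standard for the \mand\ case), by computing $\E{\min\{\surrogate, \outside\}}$ and $\E{\min\{\surrogateopt, \outside\}}$ directly as piecewise functions of $\outside$. The key algebraic fact I would isolate first is a lemma-style computation: for any random variable $\price$ and any threshold $t$,
\[
  \E{\min\{\max\{\price, \rv\}, \outside\}}
  = \begin{cases}
      \outside & \text{if } \outside \leq \rv,\\
      \rv + \E{(\price - \rv)^+ - (\price - \outside)^+} & \text{if } \outside > \rv,
    \end{cases}
\]
which, using the defining identity $\E{(\rv - \price)^+} = \inspcost$ rewritten as $\E{(\price-\rv)^+} = \rv - \mean + \inspcost$ (since $(\price-\rv)^+ - (\rv-\price)^+ = \price - \rv$), gives $\E{\min\{\surrogate, \outside\}} = \min\{\outside,\ \inspcost + \E{\min\{\price,\outside\}}\}$ after noting $\rv + \E{(\price-\rv)^+} - \E{(\price-\outside)^+} = \inspcost + \mean - \E{(\price-\outside)^+} = \inspcost + \E{\min\{\price,\outside\}}$. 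That settles the first line.

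For the second line I would split on the two cases in the definition of $\surrogateopt$. When $\rv \geq \bv$ we have $\surrogateopt \equiv \mean$ deterministically, so $\E{\min\{\surrogateopt,\outside\}} = \min\{\mean, \outside\}$, and I need to check this equals $\min\{\inspcost + \E{\min\{\price,\outside\}},\ \outside,\ \mean\}$; this reduces to showing $\mean \leq \inspcost + \E{\min\{\price,\outside\}}$ for all $\outside$ in this regime, which follows because $\rv \geq \bv$ forces $\inspcost$ large relative to the spread of $\price$ — concretely $\inspcost + \E{\min\{\price,\outside\}} \geq \inspcost + \mean - \E{(\price-\bv)^+} = \mean$ when $\outside \geq \bv$, and for $\outside < \bv$ one uses monotonicity plus the $\rv \geq \bv$ hypothesis; this is exactly the content already packaged in \cref{lemma:one-box}, so I would simply invoke that lemma's case analysis rather than re-derive it. When $\rv < \bv$, write $\surrogateopt = \min\{\surrogate, \bv\}$, so $\min\{\surrogateopt,\outside\} = \min\{\surrogate, \bv, \outside\} = \min\{\surrogate, \min\{\bv,\outside\}\}$; apply the first-line identity with $\outside$ replaced by $\min\{\bv,\outside\}$ to get $\E{\min\{\surrogateopt,\outside\}} = \min\{\min\{\bv,\outside\},\ \inspcost + \E{\min\{\price,\min\{\bv,\outside\}\}}\}$. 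It then remains to match this with $\min\{\inspcost + \E{\min\{\price,\outside\}},\ \outside,\ \mean\}$, which I would do by checking the three sub-cases $\outside \leq \rv$, $\rv < \outside < \bv$, $\outside \geq \bv$ separately, using $\inspcost + \E{\min\{\price,\bv\}} = \mean$ (the defining identity for $\bv$) in the last sub-case.

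I expect the main obstacle to be the bookkeeping in the $\rv < \bv$ case: one has to track how $\min\{\bv,\outside\}$ interacts with the two branches of the first-line identity and confirm the resulting piecewise function agrees, on each of the three intervals $(-\infty,\rv]$, $(\rv,\bv)$, $[\bv,\infty)$, with the correspondingly-active term of $\min\{\inspcost + \E{\min\{\price,\outside\}},\outside,\mean\}$ — i.e., that $\outside$ is the binding term on the first interval, the inspection term on the middle interval, and $\mean$ on the last. None of this is deep, but it is where sign errors and case-boundary slips would creep in, so I would present it as an explicit interval-by-interval table. An alternative, cleaner route would be to avoid recomputing anything: observe that $\E{\min\{\surrogate, \outside\}}$ and $\E{\min\{\surrogateopt,\outside\}}$ are each, by construction, exactly the expected cost of a particular policy in the relevant \subproblem\ — the $\surrogate$ policy corresponds to "inspect, then take the better of $\price$ and $\outside$, except commit early when $\outside$ is extreme" — and then argue these policies are optimal by appealing directly to \cref{lemma:one-box}'s characterization of the optimal policy and matching actions region by region. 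I would likely lead with the direct computation for the first (\mand) line since it is short and self-contained, then use the policy-matching argument with \cref{lemma:one-box} for the second (\opt) line to keep the case analysis honest.
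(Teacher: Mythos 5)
Your proposal is correct and follows essentially the same approach as the paper: direct computation of both expectations using the defining identities $\E{(\rv-\price)^+}=\inspcost$ and $\E{(\price-\bv)^+}=\inspcost$, splitting the \opt\ case into $\rv \geq \bv$ and $\rv < \bv$. The only cosmetic difference is that in the $\rv < \bv$ subcase the paper sidesteps your interval-by-interval table by showing $\E{\min\{\surrogateopt, \outside\}}$ is bounded above by each of $\E{\min\{\surrogate, \outside\}}$ and $\mean$, with one of the two bounds automatically tight depending on the sign of $\outside - \bv$.
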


\begin{proof}[Proof sketch]
  The \mandator\ inspection result is standard, so we focus on \optiona\ inspection. One way to see the result is to observe that $\surrogateopt$ is defined in \cref{eq:surrogate-opt} such that
  \[
    \P{\surrogateopt > \outside} = \frac{\mathrm{d}}{\mathrm{d} \outside} \optcostone(\outside),
  \]
  from which the result follows by integration. See \cref{proof:min-r-surrogate-opt} for the complete proof.
\end{proof}


\subsection{Optimal cost lower bound} Having used the \subproblem\ to introduce \opt-\surs, we are now ready to state our first main result for the full single-item selection problem.


\restatably
\begin{theorem}[Single-item selection lower bound]\label{theorem:cost-opt}
In \optiona\ inspection single-item selection, the optimal policy's expected total cost satisfies
\[\label{eq:lb-opt}\tag{LB-OPT}
  \optcost
  = \min_{\policy\in\optpol} \E{\policycost}
  \geq \E*{\min_{\itemindex\in\itemsset} \surrogateoptitem}.
\]
\end{theorem}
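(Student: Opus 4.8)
The plan is to prove the slightly stronger statement that $\E{\policycost} \ge \E{\sum_{\itemindex \in \itemsset} \selecteditem \surrogateoptitem}$ for \emph{every} $\policy \in \optpol$. The theorem is then immediate: the \dm\ selects exactly one \iitem, so the indicators $\selecteditem$ are nonnegative and sum to~$1$, whence $\sum_\itemindex \selecteditem \surrogateoptitem \ge \min_\itemindex \surrogateoptitem$ surely, and therefore $\optcost = \min_\policy \E{\policycost} \ge \E{\min_\itemindex \surrogateoptitem}$. Using the cost decomposition~\eqref{eq:total-cost} and linearity of expectation, the stronger statement reduces to the following per-\iitem\ inequality: for each~$\itemindex$,
\[
  \E{\inspecteditem \inspcostitem + \selecteditem \priceitem} \;\ge\; \E{\selecteditem \surrogateoptitem}.
\]

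First I would record two elementary properties of the \opt-\sur\ that drive the argument, both following from \cref{definition:surrogate} together with the defining equations~\eqref{eq:reserve} and~\eqref{eq:backup}, and both using that $\surrogateoptitem$ is a deterministic function of $\priceitem$: \emph{(i)} $\E{\surrogateoptitem} = \meanitem$ --- when $\rvitem \ge \bvitem$ this holds by definition, and when $\rvitem < \bvitem$ one computes $\E{\surrogateoptitem} = \E{\min\{\surrogateitem, \bvitem\}} = \E{\surrogateitem} - \E{(\surrogateitem - \bvitem)^+} = (\meanitem + \inspcostitem) - \inspcostitem$, since $\E{\surrogateitem} = \meanitem + \E{(\rvitem - \priceitem)^+} = \meanitem + \inspcostitem$ and $(\surrogateitem - \bvitem)^+ = (\priceitem - \bvitem)^+$ because $\rvitem < \bvitem$; and \emph{(ii)} $\surrogateoptitem \le \surrogateitem = \max\{\priceitem, \rvitem\}$ surely --- this is clear when $\rvitem < \bvitem$, and when $\rvitem \ge \bvitem$ it holds because $\rvitem \ge \bvitem$ forces $\rvitem \ge \meanitem$ (indeed $\inspcostitem = \E{(\rvitem - \priceitem)^+}$ while $\inspcostitem = \E{(\priceitem - \bvitem)^+} \ge \E{(\priceitem - \rvitem)^+}$, so $\E{\rvitem - \priceitem} \ge 0$), hence $\surrogateoptitem = \meanitem \le \rvitem \le \surrogateitem$.

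To prove the per-\iitem\ inequality I would split $\selecteditem = \selecteditem\inspecteditem + \selecteditem(1 - \inspecteditem)$ and use the standard fact that, for any adaptive policy, the event that \iitem~$\itemindex$ is ever inspected is independent of $\priceitem$ --- because the \dm\ commits to inspecting \iitem~$\itemindex$ before learning its price --- and hence so are the indicator $\inspecteditem$ and the indicator $\selecteditem(1 - \inspecteditem)$ that \iitem~$\itemindex$ is selected without inspection. For the select-without-inspection part, independence gives $\E{\selecteditem(1 - \inspecteditem)\priceitem} = \E{\selecteditem(1 - \inspecteditem)}\meanitem = \E{\selecteditem(1 - \inspecteditem)}\E{\surrogateoptitem} = \E{\selecteditem(1 - \inspecteditem)\surrogateoptitem}$, using property~(i) and the fact that $\surrogateoptitem$ is a function of $\priceitem$ alone. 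For the inspected part, independence of $\inspecteditem$ and $\priceitem$ together with~\eqref{eq:reserve} gives $\E{\inspecteditem}\inspcostitem = \E{\inspecteditem(\rvitem - \priceitem)^+} \ge \E{\selecteditem\inspecteditem(\rvitem - \priceitem)^+}$ (since $0 \le \selecteditem\inspecteditem \le \inspecteditem$); adding $\E{\selecteditem\inspecteditem\priceitem}$ to both sides and using $(\rvitem - \priceitem)^+ + \priceitem = \max\{\priceitem, \rvitem\}$ yields
\[
  \E{\inspecteditem\inspcostitem + \selecteditem\inspecteditem\priceitem} \;\ge\; \E{\selecteditem\inspecteditem\max\{\priceitem, \rvitem\}} \;\ge\; \E{\selecteditem\inspecteditem\surrogateoptitem}
\]
by property~(ii). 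Adding the two parts gives the per-\iitem\ inequality, and summing over $\itemindex$ finishes the proof.

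The crux --- and what makes this harder than Weitzman's analysis of \mandator\ inspection --- is the select-without-inspection action, which has no counterpart there. It forces the \opt-\sur\ to satisfy two competing demands at once: $\E{\surrogateoptitem}$ must \emph{exactly} equal $\meanitem$ (any overestimate would break the uninspected branch, where only $\meanitem$ is paid and no inspection cost is available to cover the gap), while $\surrogateoptitem$ must still not exceed $\max\{\priceitem, \rvitem\}$ (so the inspected branch goes through just as in the \mandator\ case). Capping the \mand-\sur\ at the backup price $\bvitem$ is exactly the modification that reconciles these, and checking properties~(i) and~(ii) is where the specific form of \cref{definition:surrogate} is used. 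A more computational alternative would integrate the one-\iitem\ identity $\optcostone(\outside) = \E{\min\{\surrogateopt, \outside\}}$ from \cref{lemma:min-r-surrogate} in the ``Whittle's integral'' style, but the direct argument above is shorter and exposes the two calibration properties.
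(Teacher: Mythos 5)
Your proof is correct and takes a genuinely different route from the paper's.

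The paper proves \cref{theorem:cost-opt} with a submartingale argument: it defines a potential $K(t)$ that combines the cost accumulated so far with the one-shot value $\E{\min_\itemindex \surrogatebase_\itemindex(t) \mid \filtration(t)}$ of the remaining state, shows $K(0) = \E{\min_\itemindex \surrogateoptitem}$ and $K(\numitems+1) = \policycost$, and verifies the submartingale property by checking each of the three possible actions against \cref{lemma:min-r-surrogate}. Your proof instead uses a per-\iitem\ amortization: you decompose the total cost item by item and, for each \iitem, charge $\inspecteditem\inspcostitem + \selecteditem\priceitem$ against $\selecteditem\surrogateoptitem$ using two calibration facts about the \opt-\sur\ ($\E{\surrogateoptitem} = \meanitem$ and $\surrogateoptitem \leq \max\{\priceitem, \rvitem\}$ a.s.) together with the fundamental independence of pre-inspection events from $\priceitem$. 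Both arguments hinge on that independence --- it is implicit in the paper's verification of the submartingale inequality for the ``inspect'' action --- but they package it quite differently. Your route is more elementary (no stopping-time/martingale machinery) and gives the slightly stronger, policy-dependent bound $\E{\policycost} \geq \E{\sum_\itemindex \selecteditem\surrogateoptitem}$, i.e., the expected \opt-\sur\ of the \emph{actually selected} item, from which $\E{\min_\itemindex\surrogateoptitem}$ follows by crude minimization. It is closest in spirit to the paper's Appendix~B alternative proof via \citet{beyhaghi2019pandora}, but you prove the key inequality from scratch rather than invoking their lemma. Notably, your per-\iitem\ inequality does not depend on the feasibility structure at all, so it also yields \cref{theorem:combinatorial-cost-opt} almost immediately: $\E{\policycost} \geq \E{\sum_{\itemindex\in\selectedset}\surrogateoptitem + \terminalcost(\selectedset)} \geq \E{\realizedcostsurrogateopt}$ since $\selectedset \in \constraints$ --- arguably a simpler path than the paper's combinatorial submartingale. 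One small presentational remark: the claim that $\selecteditem(1-\inspecteditem)$ is independent of $\priceitem$ is not literally a consequence (``hence'') of the independence of $\inspecteditem$ from $\priceitem$; both follow, separately, from the same underlying fact that the decision to take the relevant action on \iitem~$\itemindex$ is $\filtration$-measurable before $\priceitem$ is revealed. The mathematics is sound; the phrasing is just slightly loose.
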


\begin{proof}[Proof sketch]
  Consider an arbitrary policy~$\policy$ for the \dm. It suffices to define a submartingale that is $\E*{\min_{\itemindex\in\itemsset} \surrogateoptitem}$ at time~$0$ and is $\policycost$ when the policy selects an item. The submartingale is
  \[
    K(t) = \realizedcost(t) + \E*{\min_{\itemindex \in \itemsset} \surrogatebaseitem(t) \given \filtration(t)},
  \]
  where
  \[
    \realizedcost(t) &= \textnormal{total inspection and selection cost paid by $\policy$ during $\{0, \dots, t - 1\}$,} \\
    \surrogatebaseitem(t) &=
    \begin{cases}
      0 & \textnormal{if any \iitem\ is selected by $\policy$ during $\{0, \dots, t - 1\}$} \\
      \priceitem & \textnormal{if $\itemindex$ is inspected, but not selected, by $\policy$ during $\{0, \dots, t - 1\}$} \\
      \surrogateoptitem & \textnormal{otherwise,}
    \end{cases} \\
    \filtration(t) &= \textnormal{information $\policy$ gains from inspections during $\{0, \dots, t - 1\}$.}
  \]
  That is, the key idea is to define a time-dependent surrogate price, which is the \opt-\sur\ prior to inspection and the hidden price thereafter. Roughly speaking, $K(t)$ is a submartingale because \opt-surrogate-prices internalize inspection costs, but they do so ``optimistically'', in some sense assuming that an inspected \iitem\ can later be selected without inspection. We formally verify $K(t)$ is a submartingale using \cref{lemma:one-box} and some straightforward computation. See \cref{proof:cost-opt} for the complete proof.
\end{proof}

\begin{remark}
  \Citet[Lemma~16]{beyhaghi2019pandora} prove a result that is similar to \cref{theorem:cost-opt}, giving a bound on $\E{\policycost}$ for any policy $\policy \in \optpol$. Their bound is actually tighter than ours, but it is less explicit, because their bound expression also depends on the policy~$\policy$. We state their bound and discuss it in more detail in \cref{appendix:beyhaghi}, including an alternate proof of \cref{theorem:cost-opt} by way of their bound. This alternate proof is arguably simpler than our main proof above, but we include the main proof regardless because it introduces the key ideas needed for the extension to combinatorial selection.
\end{remark}

\autoref{theorem:cost-opt} states that the expected value of the one-shot problem in which the \dm\ chooses the best \opt-\sur\ is a lower bound for the optimal cost under \optiona\ inspection. In contrast to \autoref{lemma:min-r-surrogate}'s equality, \cref{theorem:cost-opt} is an inequality: the value of the one-shot problem may not be feasible in the \optiona\ inspection problem. This is because, roughly speaking, attempting to achieve the value of the one-shot problem may effectively ask the \dm\ to first inspect a given \iitem, but then later select it without inspection. The value of \cref{theorem:cost-opt} is that the right hand side of \cref{eq:lb-opt} can be computed directly from \subproblems, whereas it is well understood that computing the optimal cost in the \optiona\ inspection problem is intractable.

Anticipating the analysis of the combinatorial case in \autoref{sec:combinatorial}, we note that \autoref{theorem:cost-opt} extends to the combinatorial case (see \autoref{theorem:combinatorial-cost-opt}).

\paragraph{\Mandator\ inspection} It is also useful to contrast \autoref{theorem:cost-opt} with the analogous result for \mandator\ inspection, which involves the \items' \mand-surrogate prices. \Cref{lemma:min-r-surrogate} already suggests that in the \subproblem\ with \mandator\ inspection, the \mand-\sur\ summarizes the value of the optimal policy. In contrast to the case of \optiona\ inspection, this result extends to any number of \items, as stated in \autoref{proposition:no-price-info} below.

\begin{proposition}[Corollary~3 in \citep{beyhaghi2019pandora}]\label{proposition:no-price-info}
In \mandator\ inspection single-item selection, the optimal policy's expected total cost is
\[
  \mandcost
  = \min_{\policy\in\mandpol} \E{\policycost}
  = \E*{\min_{\itemindex\in\itemsset} \surrogateitem}.
\]
\end{proposition}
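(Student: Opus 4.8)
The plan is to prove the proposition by the same submartingale argument used for \cref{theorem:cost-opt}, specialized to \mandator\ inspection and \mand-surrogate prices, and then to observe that the submartingale inequality is \emph{tight} at every step of \citeauthor{weitzman1979optimal}'s policy, which simultaneously yields the matching upper bound and reproves optimality. Concretely, for an arbitrary $\policy \in \mandpol$, I would introduce the time-indexed potential
\[
  K(t) = \realizedcost(t) + \E*{\min_{m \in \itemsset} \surrogatebase_m(t) \given \filtration(t)},
\]
where $\realizedcost(t)$ is the cost incurred by $\policy$ during steps $\{0,\dots,t-1\}$, $\filtration(t)$ is the information revealed by then, and $\surrogatebase_m(t)$ equals $0$ if some \iitem\ has already been selected, $\price_m$ if \iitem~$m$ has been inspected but nothing selected yet, and $\surrogate_m$ otherwise. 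Then $K(0) = \E{\min_{m\in\itemsset}\surrogate_m}$, and at the (a.s.\ finite, since there are only $\numitems$ \items) time $\tau$ at which $\policy$ selects an \iitem\ we have $\min_{m\in\itemsset}\surrogatebase_m(\tau) = 0$, so $K(\tau) = \policycost$.

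For the lower bound $\mandcost \ge \E{\min_{m\in\itemsset}\surrogate_m}$ it suffices to check that $K$ is a submartingale and invoke optional stopping (trivial here, as the horizon is at most $\numitems+1$). There are two kinds of moves. If $\policy$ inspects an uninspected \iitem~$\itemindex$, then $\realizedcost$ increases by $\inspcostitem$ while $\surrogatebase_\itemindex$ drops from $\surrogateitem$ to $\priceitem \le \surrogateitem$; writing $M$ for the $\filtration(t)$-measurable minimum of the other $\surrogatebase$-values, a short case check gives $\min\{\surrogateitem,M\} - \min\{\priceitem,M\} \le \surrogateitem - \priceitem = (\rvitem-\priceitem)^+$ pointwise, so, since $\priceitem$ is drawn fresh from $\distributionitem$, the expected decrease of the running minimum is at most $\E{(\rvitem-\priceitem)^+} = \inspcostitem$ by \cref{eq:reserve}, exactly cancelling the added cost. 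If $\policy$ selects a (necessarily inspected) \iitem~$\itemindex$, then $\realizedcost$ increases by $\priceitem = \surrogatebase_\itemindex(t) \ge \min_{m\in\itemsset}\surrogatebase_m(t)$ while the running minimum drops to $0$, a net change of $\priceitem - \min_{m\in\itemsset}\surrogatebase_m(t) \ge 0$. Hence $K$ is a submartingale.

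For the matching upper bound I would specialize to \citeauthor{weitzman1979optimal}'s policy---inspect \items\ in increasing order of reservation price, stop once the least inspected price falls below the least reservation price among uninspected \items, and then select the least inspected price---and show both displayed inequalities become equalities, so that $K$ is a \emph{martingale} and $\mandcost \le \E{\policycost} = \E{K(\tau)} = K(0) = \E{\min_{m\in\itemsset}\surrogate_m}$. The point is that when this policy inspects \iitem~$\itemindex$, every not-yet-inspected \iitem~$m$ has $\surrogatebase_m(t) \ge \rv_m \ge \rvitem = \surrogateitem$ by the inspection order, and every already-inspected \iitem~$m$ has $\surrogatebase_m(t) = \price_m > \rvitem$ because the stopping rule has not yet fired; thus $M \ge \surrogateitem$ and the decrease of the running minimum is exactly $(\rvitem-\priceitem)^+$. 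Likewise, when the policy selects, the stopping rule forces the selected price to be no larger than any uninspected \iitem's surrogate and (being the minimum) no larger than any inspected \iitem's price, so it equals $\min_{m\in\itemsset}\surrogatebase_m(t)$ and the net change is exactly $0$.

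I expect the equality (upper-bound) direction to be the main obstacle: one must translate the combinatorial description of \citeauthor{weitzman1979optimal}'s policy---its inspection order and its stopping rule---into the two pointwise identities that collapse the submartingale slack to zero, whereas the submartingale direction and the optional-stopping step are routine. An alternative route, should the martingale bookkeeping prove awkward, is to cite the known optimality of \citeauthor{weitzman1979optimal}'s policy and compute its expected cost directly by the standard ``capped value'' amortization, charging each \iitem's inspection cost against the gap $\surrogate_m - \price_m$; the martingale argument above is essentially a streamlined packaging of that computation.
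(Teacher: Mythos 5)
The paper does not prove \cref{proposition:no-price-info}; it cites it to \citet[Corollary~3]{beyhaghi2019pandora} (and it is a standard restatement of results of \citealp{weitzman1979optimal} and \citealp{kleinberg2016descending}). Your proof is therefore a self-contained alternative rather than a reconstruction of the paper's argument. It is essentially correct, and it is a natural choice: you specialize the submartingale construction that the paper \emph{does} use to prove \cref{theorem:cost-opt} to the \mandator\ inspection setting (where the ``select without inspection'' case disappears), and then add the one genuinely new ingredient, namely that under \citeauthor{weitzman1979optimal}'s policy the potential $K(t)$ is a martingale, which gives the matching upper bound. This is a cleaner packaging than the usual amortized ``capped value'' computation, though mathematically equivalent.

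One small inaccuracy worth fixing: in the martingale-tightness step you write that every not-yet-inspected \iitem~$m$ satisfies $\surrogatebase_m(t) \ge \rv_m \ge \rvitem = \surrogateitem$. But $\surrogateitem = \max\{\priceitem, \rvitem\}$ is random given $\filtration(t)$ and need not equal $\rvitem$, so the displayed chain of inequalities is not right as written. Fortunately, your earlier pointwise identity only needs $M \ge \rvitem$ to collapse $\min\{\surrogateitem, M\} - \min\{\priceitem, M\}$ to $(\rvitem - \priceitem)^+$ for every realization of $\priceitem$ (check the two cases $\priceitem \ge \rvitem$ and $\priceitem < \rvitem$), and $M \ge \rvitem$ is exactly what the inspection order and the not-yet-fired stopping rule guarantee. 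So replace $\surrogateitem$ by $\rvitem$ at that spot and the argument goes through. You should also state the off-by-one convention for $\tau$ explicitly, since with the paper's convention $\surrogatebase_m(t)$ reflects events during $\{0,\dots,t-1\}$, so the first time $\min_m \surrogatebase_m(t)=0$ is the step \emph{after} selection; the paper sidesteps this by simply evaluating $K(\numitems+1)$.
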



In other words, despite the adaptive nature of the problem in \cite{weitzman1979optimal}, its value \emph{can} be obtained in a one shot problem, in which the \dm\ picks the \iitem\ with the lowest \mand-\sur. In fact, not only the values of the two problems coincide, but also the \dm\ selects the same \iitem\ under both policies \citep{kleinberg2016descending, beyhaghi2019pandora}. In other words, the one-shot problem also describes the \iitem\ that is \emph{eventually} selected after adaptive inspection in Pandora's box problem.

The contrast between \autoref{theorem:cost-opt} and \autoref{proposition:no-price-info} leaves open the question of whether a \emph{feasible} policy for the \optiona\ inspection case exists that provides a reasonable upper bound for the optimal cost, while at the same time inheriting the simplicity of the policy that delivers the optimal cost of \citet{weitzman1979optimal}.  As we explain next, our local hedging policy does precisely this.

\section{Local hedging}\label{sec:local-hedging}


The local hedging policy that we introduce in this section allows us to marry the results in \autoref{theorem:cost-opt} and \autoref{proposition:no-price-info} to provide a lower bound on the optimal cost for \optiona\ inspection. Local hedging inherits the property of \mandator\ inspection that its value can be calculated from the corresponding surrogate prices. In contrast to the one-shot problem that is defined by the \opt-surrogate prices, the local hedging policy always induces a feasible policy for \optiona\ inspection.

\subsection{Local hedging for the \subproblem}

To define the local hedging policy, we consider first the \subproblem\ (see \cref{sec:subproblem-definition}). In this setting, the local hedging policy is parameterized by a single parameter $\phedge \in [0, 1]$, which we call the item's \emph{hedging probability}. At the beginning, the \dm\ flips a \phedge-weighted coin.
\* With probability~\phedge, the \dm\ labels the \iitem\ as \emph{\mandator-inspection}.
\* With probability~$1-\phedge$, the \dm\ labels the \iitem\ as \emph{non-inspection}.
\*/
These labels constitute commitments by the \dm: a non-inspection item will never be inspected, and \amandator-inspection item will never be selected without inspection. After making this commitment, the \dm\ takes the optimal action that respects this commitment. That is:
\* With probability~\phedge, the \dm\ treats the \subproblem\ as having \mandator\ inspection, choosing between the \iitem\ and the outside option as in \cite{weitzman1979optimal}. Specifically:
\** If $\outside \leq \rv$, the \dm\ selects the outside option~\outside.
\** If $\outside > \rv$, the \dm\ inspects the \iitem, then selects whatever is best between its price~\price\ and the outside option~\outside.
\* With probability~$1-\phedge$, the \dm\ selects whatever is best between the outside option~\outside\ and the \iitem's expected value~$\mean = \E{\price}$.
\*/

Each hedging probability \phedge\ defines a local hedging policy, denoted $\policyhedge(p)$. Below, we sometimes refer to the \emph{local \phedge-hedging policy} when we want to emphasize the specific hedging probability~\phedge.

Like we did for \mandator\ and \optiona\ inspection, we can define an \iitem's local \phedge-hedging \sur.

\begin{definition}[Local hedging surrogate prices]\label{definition:surrogate-hedge}
An \iitem's \emph{local $p$-hedging \sur} (henceforth, \hedge-\sur), denoted by $\surrogatehedge$, is the random variable
\begin{align}\label{eq:surrogate-hedge}\tag{\hedge-S}
  \surrogatehedge
  = \begin{cases}
      \surrogate & \textnormal{with probability } \phedge\\
      \mean & \textnormal{with probability } 1 - \phedge.
    \end{cases}
\end{align}
\end{definition}

See \cref{fig:plot-surrogates} for a comparison between the different types of surrogate prices. The definition of the surrogate price makes evident that under local \phedge-hedging, the \dm\ faces the same problem as in \cite{weitzman1979optimal} with probability \phedge, and with the remaining probability the \dm\ faces an \iitem\ of known value, \mean. Importantly in what follows, the local hedging policy is a policy for the \optiona\ inspection problem that runs by randomizing over policies for the \emph{induced} \mandator\ inspection problem. In particular, \cref{lemma:min-r-surrogate} implies
\[
  \E{\policycostone[\policyhedge(\phedge)](\outside)} = \E{\min\{\surrogatehedge, \outside\}}.
\]

\begin{figure}
  \centering
  \newcommand{\plotSurrogate}[3]{%
  \begin{tikzpicture}[scale=0.8]
    \draw[axis, ->] (0, 0) -- (\xMax, 0) node[right] {$r$};
    \draw[axis, ->] (0, 0) -- (0, \yMax) node[above] {$\mathrlap{\E{\min\{#2, r\}}}\hphantom{\mathbb{EE}}$};

    \draw[ultra thick, surrogate #1] plot (\x, {#3});

    \draw[thin, dashed, outside] (0, 0) -- (\yMax, \yMax);
    \draw[thin, dashed, grab] (0, \vMean) -- (\xMax, \vMean);
    \draw[thin, dashed, insp] plot (\x, {\fInsp{\x}});
  \end{tikzpicture}%
  \ignorespaces%
}%
\begin{subfigure}{0.31\linewidth}
  \centering
  \plotSurrogate{insp}{\surrogate}{min(\x, \fInsp{\x})}
  \caption{\mand-\sur}
\end{subfigure}%
\hfill%
\begin{subfigure}{0.31\linewidth}
  \centering
  \plotSurrogate{opt}{\surrogateopt}{min(\x, \fInsp{\x}, \vMean)}
  \caption{\opt-\sur}
\end{subfigure}%
\hfill%
\begin{subfigure}{0.31\linewidth}
  \centering
  \plotSurrogate{hedge}{\surrogatehedge}{\vPHedge * min(\x, \fInsp{\x}) + (1 - \vPHedge) * min(\x, \vMean)}
  \caption{\hedge-\sur ($\phedge = \frac{1}{2}$)}
\end{subfigure}%
  \caption{Illustrations of the different types of surrogate prices.}
  \label{fig:plot-surrogates}
\end{figure}

\subsection{Local approximation}
A natural question is whether a hedging probability \phedge\ exists such that local hedging provides a good approximation of the optimal policy under \optiona\ inspection. We start by defining our notion of approximation in the single-item case.

\begin{definition}[Local \ratio-approximation]\label{definition:local-ratio-approx}
Consider an \iitem\ with inspection cost~$\inspcost$ and hidden price $\price \sim \distribution$. We say that local \phedge-hedging is a \emph{local \ratio-approximation} for the \iitem\ if for all $\outside\in\reals$,
\[\label{eq:lh:goal}
  \E{\min\{\surrogatehedge, \outside\}} \leq \E{\min\{\ratio\surrogateopt, \outside\}}.
\]
We say the \iitem\ \emph{admits} a local \ratio-approximation if there exists $\phedge \in [0, 1]$ such that local \phedge-hedging is a local \ratio-approximation.
\end{definition}

The intuition behind local \ratio-approximation is as follows. Recall from \cref{lemma:min-r-surrogate} that $\E{\min\{\surrogateopt,\outside\}}$ is the optimal cost in the \subproblem\ under \optiona\ inspection. Thus, \autoref{eq:lh:goal} states that in the \subproblem, given the choice between
\* using local hedging to randomly commit the \iitem\ to be \mandator-inspection or non-inspection; or
\* inflating the \iitem's costs, namely inspection cost and hidden price, by a factor of~$\ratio$;
\*/
local hedging is preferable \emph{for all} outside option values~$\outside$. It is thus conceivable that in the full single-item selection problem, using local hedging for all \items\ is preferable to inflating costs of all \items. Indeed, this is the core idea behind our multi-item result (\cref{theorem:lh-pandora}).


In the multi-item setting the outside option \outside\ is a stand-in for the continuation value after inspecting the current \iitem. The property that the value of the outside option is not scaled up by \ratio\ in \cref{eq:lh:goal} is therefore \emph{crucial}. If we replaced the right-hand side with $\ratio \E{\min\{\surrogateopt, \outside\}}$, we would effectively be asking that using local hedging for a single \iitem\ be preferable to inflating not just that \iitem's costs, but all other \items' costs, too.

Of course, defining local approximation is only useful if it is actually achievable. The main result of this section, \cref{theorem:lh} below, is that all \items\ admit a local $\frac{4}{3}$-approximation or better. We prove \cref{theorem:lh} in \cref{sec:lh-proof}.


\begin{theorem}[All \items\ admit local approximation]\label{theorem:lh}
  Consider an \iitem\ with inspection cost~$\inspcost$ and hidden price $\price \sim \distribution$. Then $\hedge(\phedge)$ is a local $\ratio$-approximation for the \iitem, where
  \[
    \phedge &= \max\curlgp[\bigg]{\frac{\mean - \rv}{\mean - \rv + {\inspcost \rv}/{\mean}}, 0},
    &
    \label{eq:lh:ratio-optimized}
    \ratio &= \max\curlgp[\bigg]{\frac{\mean - \rv + \inspcost}{\mean - \rv + {\inspcost \rv}/{\mean}}, 1} \leq \frac{4}{3}.
  \]
  In particular, all \items\ admit a local $\frac{4}{3}$-approximation.
\end{theorem}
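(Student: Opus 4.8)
The plan is to verify both conclusions --- that local $\phedge$-hedging is a local $\ratio$-approximation, i.e.\ that \cref{eq:lh:goal} holds, and that $\ratio \le \frac{4}{3}$ --- by reducing \cref{eq:lh:goal} to a one-dimensional inequality about a single concave function. Aside from the trivial cases $\inspcost = 0$ (where $\phedge = \ratio = 1$ and, since then $\surrogate = \surrogateopt = \price$ almost surely, \cref{eq:lh:goal} holds with equality) and $\mean = 0$ (where $\price \equiv 0$), I would first dispose of the case $\rv \ge \bv$: this is equivalent to $\rv \ge \mean$, so $\surrogateopt = \mean$ by \cref{eq:surrogate-opt}, the stated formulas give $\phedge = 0$ and $\ratio = 1$ (the denominator $\mean - \rv + \inspcost\rv/\mean$ is positive because $\mean \le \rv \le \mean + \inspcost$), and $\surrogatehedge = \mean = \surrogateopt$, so \cref{eq:lh:goal} holds with equality. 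So assume $\rv < \bv$; then $0 < \rv < \mean < \mean + \inspcost$, $\ratio > 1$, $\phedge \in (0,1)$, and, since hidden prices are nonnegative, $\inspcost = \E{(\rv - \price)^+} \le \rv$. The crucial reduction is this: combining \cref{lemma:min-r-surrogate}, the identity $\optcostone(\cdot) = \min\{\mandcostone(\cdot),\mean\}$ (also from \cref{lemma:min-r-surrogate}), the description of local hedging, and $\E{\min\{\ratio\surrogateopt,\outside\}} = \ratio\,\optcostone(\outside/\ratio)$, inequality \cref{eq:lh:goal} becomes
\[
  \phedge\,\mandcostone(\outside) + (1-\phedge)\min\{\mean,\outside\} \;\le\; \ratio\,\min\{\mandcostone(\outside/\ratio),\,\mean\} \qquad\text{for all }\outside\in\reals.
\]
Here I would use only that $\mandcostone(\cdot) = \E{\min\{\surrogate,\cdot\}}$ is concave and nondecreasing with slope in $[0,1]$, equals the identity on $(-\infty,\rv]$, satisfies $\mandcostone(0) = 0$ and $\mandcostone(\outside) \le \min\{\outside,\mean+\inspcost\}$, and tends to $\mean+\inspcost$.

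Next I would prove the displayed inequality by a case split on $\outside$, using concavity of $\mandcostone$ to handle the worst case over distributions. Since $\rv < \mean$ forces $\ratio\rv < \mean$ (equivalently $(\mean-\rv)^2 > 0$), on the range $\outside \le \ratio\rv$ the right side equals $\outside$ and, by $\mandcostone(\outside) \le \outside$, so does an upper bound for the left side, so the inequality holds there. For $\outside > \ratio\rv$ we have $\rv < \outside/\ratio < \outside$, so I would bound $\mandcostone(\outside/\ratio)$ from below by the value at $\outside/\ratio$ of the chord of $\mandcostone$ joining the points $(\rv,\rv)$ and $(\outside,\mandcostone(\outside))$. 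Writing $v := \mandcostone(\outside)$, which ranges over $[\rv,\,\min\{\outside,\mean+\inspcost\}]$, it then suffices to prove
\[
  \phedge\,v + (1-\phedge)\min\{\mean,\outside\} \;\le\; \ratio\,\min\Bigl\{\rv + \frac{v-\rv}{\outside-\rv}\Bigl(\frac{\outside}{\ratio}-\rv\Bigr),\;\mean\Bigr\}.
\]
The left side is affine in $v$ and the right side is concave in $v$, so it is enough to check the two endpoints $v = \rv$ and $v = \min\{\outside,\mean+\inspcost\}$.

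At $v = \rv$ the chord term is $\rv$, so the inequality reads $\phedge\rv + (1-\phedge)\min\{\mean,\outside\} \le \ratio\rv$; its worst case is $\outside \ge \mean$, and there --- after substituting $\ratio = 1 + \phedge\inspcost/\mean$, which one checks equals the stated $\ratio$ --- it reduces exactly to $\phedge \ge \frac{\mean-\rv}{\mean-\rv+\inspcost\rv/\mean}$, and this is how the formula for $\phedge$ is forced. At $v = \min\{\outside,\mean+\inspcost\}$, splitting on $\outside$ versus $\mean+\inspcost$ and versus $\ratio\mean$, the chord term simplifies (for instance to $\outside/\ratio$ when $\outside \le \mean+\inspcost$) and the inequality reduces to $\ratio \ge 1 + \phedge\inspcost/\mean$, met with equality, together with a few elementary comparisons. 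Finally, for $\ratio \le \frac{4}{3}$: substituting the formula, this is equivalent to $\inspcost(3\mean-4\rv) \le \mean(\mean-\rv)$, which is immediate if $3\mean \le 4\rv$, and otherwise follows from $\inspcost \le \rv$ and the identity $\rv(3\mean-4\rv) = \mean(\mean-\rv) - (\mean-2\rv)^2$.

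I expect the crux to be the middle step. The stated $\phedge$ and $\ratio$ depend only on the triple $(\mean,\rv,\inspcost)$ --- not on the rest of $\distribution$, nor even on $\bv$ --- so the inequality has to hold uniformly over all distributions consistent with that triple, and it is the chord/concavity reduction to two endpoints that makes this manageable. The subtle points are recognizing that the $v = \rv$ endpoint, evaluated at large $\outside$, is precisely what pins down $\phedge$, and that nonnegativity of prices (through $\inspcost \le \rv$) is exactly what turns the ratio bound into $(\mean-2\rv)^2 \ge 0$.
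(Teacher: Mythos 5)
Your proposal is correct and uses essentially the same core argument as the paper: both hinge on concavity of $\mandcostone(\cdot) = \E{\min\{\surrogate, \cdot\}}$ (equivalently, the decreasing secant slope $m(\outside)$ in the paper's proof of \cref{lemma:local-ratio-approx}) and the resulting chord bound controlling $\mandcostone(\outside/\ratio)$ from below, and both close the $\ratio \le 4/3$ step via $\inspcost \le \rv$ and a perfect square. The only real difference is packaging---the paper first establishes a local $\ratio(\phedge)$-approximation for general $\phedge$ (\cref{lemma:local-ratio-approx}), using $m(\outside) \le 1$, and then optimizes over $\phedge$, whereas you verify the optimal $(\phedge,\ratio)$ directly by treating $v = \mandcostone(\outside)$ as a free parameter on $[\rv, \min\{\outside,\mean+\inspcost\}]$ and reducing to an affine-versus-concave endpoint check in $v$.
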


\subsection{Local hedging for single-item selection}

So far, we have shown that local hedging delivers on both our desiderata for the case in which the \dm\ has one item and an outside option with known value. We now show that local hedging's ability to deliver a local \ratio-approximation for the single \iitem\ instances is key for it to deliver a \ratio-approximation to the optimal cost under \optiona\ inspection.

Note first that the local hedging policy can easily be extended to the case in which the \dm\ has $N$ \items. In this case, the policy is parameterized by a vector of hedging probabilities $\mathbf{\phedge}=(\phedge_1,\dots,\phedge_N)$, though we leave this vector implicit in our notation, denoting the policy as simply~$\policyhedge$. At the beginning, the \dm\ independently flips $N$ coins, each with its own bias \phedgeitem. The \dm\ is then faced with an instance of Pandora's box problem, in which the \dm\ can inspect those \items\ that are labeled \mandator-inspection, and conditional on stopping the \dm\ can obtain whatever is best between the already inspected prices and the minimum expected value amongst those \items\ labeled non-inspection.

%

%
%
%

\begin{theorem}[Local hedging solves \optiona\ inspection]\label{theorem:lh-pandora}
  Consider a single-item selection problem with \optiona\ inspection. If every \iitem\ admits a local \ratio-approximation, then by using the corresponding hedging probabilities, local hedging is an \ratio-approximation for single-item selection:
  \begin{align}\label{eq:lh-pandora}
    \E{\policycost[\policyhedge]}
    = \E*{\min_{\itemindex\in\itemsset}\surrogatehedgeitem}
    \leq \ratio \optcost.
  \end{align}
  In particular, there exist hedging probabilities such that local hedging is a $\frac{4}{3}$-approximation.
\end{theorem}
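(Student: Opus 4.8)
The plan is to prove the two parts of \cref{theorem:lh-pandora} in order: first the equality $\E{\policycost[\policyhedge]} = \E*{\min_{\itemindex\in\itemsset}\surrogatehedgeitem}$, then the inequality $\E*{\min_{\itemindex\in\itemsset}\surrogatehedgeitem} \leq \ratio\,\optcost$. The equality should follow essentially for free from the structure already in place: conditioned on the outcome of the $N$ independent hedging coins, local hedging reduces to running \citeauthor{weitzman1979optimal}'s optimal \mandator\ inspection policy on the induced instance (where non-inspection \items\ are replaced by deterministic boxes of cost $\meanitem$), and by \cref{proposition:no-price-info} the value of that induced \mandator\ problem is exactly $\E*{\min_{\itemindex\in\itemsset} \surrogatebase_\itemindex}$ where $\surrogatebase_\itemindex$ equals $\surrogateitem$ on the event \iitem~$\itemindex$ drew the \mandator-inspection label and $\meanitem$ otherwise. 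Taking expectation over the coins and using independence identifies this with $\E*{\min_{\itemindex\in\itemsset}\surrogatehedgeitem}$, since $\surrogatehedgeitem$ is by \cref{definition:surrogate-hedge} precisely the mixture of $\surrogateitem$ (weight $\phedgeitem$) and $\meanitem$ (weight $1-\phedgeitem$), drawn independently across \items.

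For the inequality, the idea is to push the pointwise local-approximation bound through a $\min$ over independent random variables. By hypothesis each \iitem\ admits a local \ratio-approximation, so $\E{\min\{\surrogatehedgeitem,\outside\}} \leq \E{\min\{\ratio\surrogateoptitem,\outside\}}$ for every $\outside\in\reals$. I want to upgrade this to $\E*{\min_{\itemindex}\surrogatehedgeitem} \leq \E*{\min_{\itemindex}\ratio\surrogateoptitem}$, which combined with \cref{theorem:cost-opt} — namely $\E*{\min_{\itemindex}\surrogateoptitem} \leq \optcost$ and the fact that $\min$ commutes with scaling by the constant $\ratio \geq 1$ — finishes the proof. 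The cleanest route is via the tail/layer-cake identity: for a nonnegative (or suitably bounded-below) random variable $X$, $\E{X} = \text{const} + \int \P{X > \outside}\,d\outside$, so it suffices to show $\P{\min_\itemindex \surrogatehedgeitem > \outside} \leq \P{\min_\itemindex \ratio\surrogateoptitem > \outside}$ for all $\outside$. I do not expect that the two relevant collections $\{\surrogatehedgeitem\}$ and $\{\ratio\surrogateoptitem\}$ can be coupled item-by-item in the almost-sure sense, so I will instead work at the level of the survival functions directly, exploiting independence across \items: $\P{\min_\itemindex \surrogatehedgeitem > \outside} = \prod_\itemindex \P{\surrogatehedgeitem > \outside}$ and likewise for the right side, so it reduces to the per-item statement $\P{\surrogatehedgeitem > \outside} \leq \P{\ratio\surrogateoptitem > \outside}$. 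This last statement does not obviously follow from the integrated inequality in \cref{definition:local-ratio-approx} — the latter is an inequality of expectations $\E{\min\{\cdot,\outside\}}$, i.e. of integrated survival functions, not of the survival functions themselves.

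I therefore expect the main obstacle to be exactly this gap between "integrated" and "pointwise" comparison of survival functions. There are two ways I would try to close it. The first: revisit the proof of \cref{theorem:lh} and check whether the construction there actually delivers the stronger pointwise domination $\P{\surrogatehedgeitem > \outside} \leq \P{\ratio\surrogateoptitem > \outside}$ for all $\outside$ — given the explicit piecewise-linear shapes of the cost-in-$\outside$ functions visible in \cref{fig:plot-surrogates}, whose derivatives are the survival functions, this seems plausible and may just be what \cref{theorem:lh}'s proof establishes en route. The second, more robust approach, which I suspect is the one the authors intend: define the time-indexed submartingale exactly as in the proof sketch of \cref{theorem:cost-opt}, but started from $\min_\itemindex \surrogatehedgeitem$ instead of $\min_\itemindex \surrogateoptitem$ — since local hedging is a genuine feasible \optiona-inspection policy, this gives $\E*{\min_\itemindex \surrogatehedgeitem} = \E{\policycost[\policyhedge]}$ directly (recovering the equality), and then the inequality against $\ratio\optcost$ is obtained by comparing, context by context (outside option $=$ continuation value), using \cref{definition:local-ratio-approx} to replace each \iitem's $\surrogatehedgeitem$ by $\ratio\surrogateoptitem$ and invoking \cref{theorem:cost-opt} on the inflated instance. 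The final sentence of the theorem — existence of a $\frac{4}{3}$-approximation — is then immediate: \cref{theorem:lh} guarantees every \iitem\ admits a local $\ratio_\itemindex$-approximation with $\ratio_\itemindex \leq \frac{4}{3}$, so applying the body of the theorem with $\ratio = \max_\itemindex \ratio_\itemindex \leq \frac{4}{3}$ gives the claim.
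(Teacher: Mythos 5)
Your treatment of the equality $\E{\policycost[\policyhedge]} = \E{\min_\itemindex \surrogatehedgeitem}$ is correct and matches the paper (condition on the coins, apply \cref{proposition:no-price-info}, sum over coin outcomes using independence). For the inequality, you correctly identify that what must be shown is $\E{\min_\itemindex \surrogatehedgeitem} \leq \E{\min_\itemindex \ratio\surrogateoptitem}$, and you correctly diagnose that the layer-cake reduction to pointwise survival-function domination $\P{\surrogatehedgeitem > \outside} \leq \P{\ratio\surrogateoptitem > \outside}$ does \emph{not} follow from \cref{definition:local-ratio-approx}, which is only an integrated comparison.

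However, both of your proposed fixes miss the mark, and the first one actually fails. Pointwise survival domination is \emph{false} in general: $\surrogateoptitem$ is a.s.\ bounded above by $\bvitem$ (when $\rvitem < \bvitem$), so $\P{\ratio\surrogateoptitem > \outside} = 0$ for $\outside > \ratio\bvitem$; but $\surrogatehedgeitem$ equals $\surrogateitem = \max\{\priceitem,\rvitem\}$ with probability $\phedgeitem > 0$, which is unbounded whenever $\priceitem$ is (e.g.\ exponential). So $\P{\surrogatehedgeitem > \outside} > 0 = \P{\ratio\surrogateoptitem > \outside}$ for large $\outside$. The second fix (re-running a submartingale) is both unnecessary machinery and left too vague to assess — in particular ``invoking \cref{theorem:cost-opt} on the inflated instance'' is not what is needed.

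The simple argument you overlook is a one-at-a-time replacement with conditioning on the independent remainder. Since the $N$ items' surrogate prices (of either kind) are mutually independent, and since \cref{definition:local-ratio-approx} says $\E{\min\{\surrogatehedgeitem, \outside\}} \leq \E{\min\{\ratio\surrogateoptitem, \outside\}}$ holds for \emph{every constant} $\outside$, one can take $R = \min\{\surrogatehedge[\phedge_2]_2, \dots, \surrogatehedge[\phedge_N]_N\}$, which is independent of both $\surrogatehedge[\phedge_1]_1$ and $\surrogateopt_1$, condition on $R$, apply the definition with $\outside = R$, and take expectations. This gives
\[
\E*{\min\{\surrogatehedge[\phedge_1]_1, R\}} \leq \E*{\min\{\ratio\surrogateopt_1, R\}},
\]
i.e.\ item~$1$'s $\hedge$-surrogate is replaced by $\ratio$ times its $\opt$-surrogate inside the overall $\min$. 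Iterating over all $N$ items yields $\E{\min_\itemindex \surrogatehedgeitem} \leq \ratio\E{\min_\itemindex \surrogateoptitem}$, and \cref{theorem:cost-opt} finishes the proof. This is exactly the paper's argument; no layer-cake, no coupling, no martingale is needed. The point is that local approximation is an integrated inequality valid for \emph{all} $\outside$, which is precisely what is preserved when $\outside$ is replaced by an independent random variable — there is no need to upgrade to pointwise survival domination, which you correctly suspected but then tried to establish anyway.
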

\autoref{theorem:lh-pandora} states that local hedging satisfies our two desiderata in any instance of Pandora's box with \optiona\ inspection. Indeed, the equality in \eqref{eq:lh-pandora} states that the cost of the local hedging policy can be obtained from the \hedge-surrogate prices. Moreover, the inequality in \eqref{eq:lh-pandora} states that local hedging provides a $\frac{4}{3}$-approximation (or better) to the optimal cost under \optiona\ inspection.

\begin{proof}[Proof of \autoref{theorem:lh-pandora}]
  Fix $\ratio$, and let $\phedgeitem$ be the hedging probability that achieves a local $\ratio$-approximation for \iitem~$\itemindex$. The equality $\E{\policycost[\policyhedge]}
    = \E{\min_{\itemindex\in\itemsset}\surrogatehedgeitem}$ follows from \cref{proposition:no-price-info} and the fact that local hedging treats the problem like \amandator-inspection problem after making its randomized commitments. Then, using \cref{eq:lh:goal}, we compute
  \[
    \E*{\policycost[\policyhedge]}
    &= \E*{\min\{\surrogatehedgeitem[1],\surrogatehedgeitem[2],\dots,\surrogatehedgeitem[N]\}} \\
    &\leq \E*{\min\{\ratio\surrogateoptitem[1],\surrogatehedgeitem[2],\dots,\surrogatehedgeitem[N]\}}\\
    &\ \ \vdots \\
    &\leq \E*{\min\{\ratio\surrogateoptitem[1],\ratio\surrogateoptitem[2],\dots,\ratio\surrogateoptitem[N]\}}
    = \ratio\E*{\min_{\itemindex\in\itemsset}\surrogateoptitem}
    \leq \ratio \optcost,
  \]
  where the last inequality follows from \cref{theorem:cost-opt}.
\end{proof}

\paragraph{Organization of the rest of this section}

\autoref{sec:lh-proof} proves \autoref{theorem:lh}, the local approximation step. \autoref{sec:bounds} discusses different ways in which our results on local approximation and local hedging are tight. A reader interested in the application of local hedging to combinatorial optimization can skip straight to \autoref{sec:combinatorial} with little loss of continuity.

\subsection{Proof of \autoref{theorem:lh}}\label{sec:lh-proof}

Before proving \cref{theorem:lh}, we prove a lemma that characterizes the local approximation ratio of local \phedge-hedging for arbitrary~$\phedge$. \Cref{theorem:lh} then follows by optimizing~$\phedge$.

\begin{lemma}\label{lemma:local-ratio-approx}
  Suppose $\rv<\bv$.  For all $\outside \in \reals$, the local \phedge-hedging policy is a local $\ratio(\phedge)$-approximation, where
  \[
    \label{eq:lh:ratio}
    \ratio(\phedge) = 1 + \max\curlgp*{\frac{(1 - \phedge )(\mean - \rv)}{\rv}, \frac{\phedge \inspcost}{\mean}}.
  \]
\end{lemma}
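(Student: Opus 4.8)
The plan is to unfold every term of the form $\E{\min\{\cdot,\outside\}}$ using \cref{lemma:min-r-surrogate}, thereby reducing the defining inequality of a local $\ratio$-approximation, $\E{\min\{\surrogatehedge,\outside\}}\le\E{\min\{\ratio\surrogateopt,\outside\}}$, to an elementary inequality between two explicit functions of the outside option~$\outside$. Write $\phi(\outside):=\inspcost+\E{\min\{\price,\outside\}}$ for the ``inspect, then take the better'' cost curve (the middle curve of \cref{fig:one-box}); it is concave and nondecreasing, satisfies $\phi(\rv)=\rv$ and $\phi(\bv)=\mean$ by \eqref{eq:reserve} and \eqref{eq:backup}, and is bounded above by $\inspcost+\mean$. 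Comparing these identities with $\mean=\E{\price}$ gives $\rv\le\mean\le\bv$ whenever $\rv<\bv$. Then \cref{lemma:min-r-surrogate} yields $g(\outside):=\E{\min\{\surrogatehedge,\outside\}}=\phedge\min\{\phi(\outside),\outside\}+(1-\phedge)\min\{\mean,\outside\}$, while the rescaling $\min\{\ratio y,\outside\}=\ratio\min\{y,\outside/\ratio\}$ together with \cref{lemma:min-r-surrogate} gives $\E{\min\{\ratio\surrogateopt,\outside\}}=\ratio\,\optcostone(\outside/\ratio)=\min\{\outside,\,\ratio\phi(\outside/\ratio),\,\ratio\mean\}$. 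So it is enough to establish, for every $\outside$, the three inequalities $g(\outside)\le\outside$, $g(\outside)\le\ratio\mean$, and $g(\outside)\le\ratio\phi(\outside/\ratio)$. (We may assume $\rv>0$, since $\rv=0$ forces $\inspcost=0$ and then $\surrogate=\surrogateopt$ makes the claim immediate.)

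The first two inequalities are short. As $g$ is a convex combination of $\min\{\phi(\outside),\outside\}\le\outside$ and $\min\{\mean,\outside\}\le\outside$, we get $g(\outside)\le\outside$; and $g(\outside)\le\phedge(\inspcost+\mean)+(1-\phedge)\mean=\mean+\phedge\inspcost\le\ratio\mean$, using the bound $\ratio(\phedge)\ge 1+\phedge\inspcost/\mean$ coming from the second term of the maximum defining $\ratio(\phedge)$. All the work is in the third inequality, $g(\outside)\le\ratio\phi(\outside/\ratio)$, which I would handle by cases on $\outside$. When $\outside\le\ratio\rv$, we have $\outside/\ratio\le\rv$, hence $\phi(\outside/\ratio)\ge\outside/\ratio$ and therefore $\ratio\phi(\outside/\ratio)\ge\outside\ge g(\outside)$. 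When $\outside>\ratio\rv$, write $\outside/\ratio$ as a convex combination of $\rv$ and $\outside$; concavity and $\phi(\rv)=\rv$ then give the key bound $\ratio\phi(\outside/\ratio)\ge(1-\theta)\outside+\theta\phi(\outside)$, where $\theta:=(\outside-\ratio\rv)/(\outside-\rv)\in[0,1)$. A short computation shows $\theta\le\phedge\iff\outside\le\outside_0:=(\ratio-\phedge)\rv/(1-\phedge)$, and that the bound $\ratio(\phedge)\ge 1+(1-\phedge)(\mean-\rv)/\rv$ coming from the first term of the maximum is exactly equivalent to $(\ratio-\phedge)\rv\ge(1-\phedge)\mean$, hence to $\outside_0\ge\mean$.

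These observations split the case $\outside>\ratio\rv$ cleanly. If $\theta\le\phedge$, then bounding $\min\{\mean,\outside\}\le\outside$ and using $\phi(\outside)\le\outside$ gives $(1-\theta)\outside+\theta\phi(\outside)-g(\outside)\ge(\phedge-\theta)(\outside-\phi(\outside))\ge 0$. If $\theta>\phedge$, then $\outside>\outside_0\ge\mean$, so $g(\outside)=\phedge\phi(\outside)+(1-\phedge)\mean$; when $\outside\ge\bv$ the bound $\phi(\outside)\ge\mean$ reduces the goal to $(1-\theta)(\outside-\mean)\ge 0$, and when $\mean\le\outside<\bv$ I would lower-bound $\phi(\outside)$ by the chord of $\phi$ through $(\rv,\rv)$ and $(\bv,\mean)$, turning the goal into $F(\outside)\ge 0$ for an explicit $F$; differentiating (using $\theta'(\outside)(\outside-\rv)=1-\theta$) shows $F'(\outside)=(1-\phedge)(\mean-\rv)/(\bv-\rv)\ge 0$, so $F$ is minimized as $\outside\downarrow\outside_0$, where $F(\outside_0)=(\ratio-\phedge)\rv-(1-\phedge)\mean\ge 0$ by the equivalence just noted. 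This proves \cref{lemma:local-ratio-approx}. \Cref{theorem:lh} then follows by choosing $\phedge$ to equalize the two terms of the maximum in $\ratio(\phedge)$ (the case $\rv\ge\bv$ is trivial: then $\rv\ge\mean$ forces $\phedge=0$ and $\ratio=1$) and checking $\ratio(\phedge)\le\tfrac{4}{3}$ by a one-line maximization over $\rv\in[0,\mean]$.

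I expect the main obstacle to be the sub-case $\outside>\ratio\rv$, $\theta>\phedge$, $\outside<\bv$: there both naive bounds on $\phi(\outside)$ are too weak, and one must use the concavity lower bound on $\ratio\phi(\outside/\ratio)$ and the chord lower bound on $\phi(\outside)$ simultaneously and then run the one-variable optimization above. It is also exactly here that the precise value of $\ratio(\phedge)$ — not merely its order of magnitude — and the hypothesis $\rv<\bv$ get used.
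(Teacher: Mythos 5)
Your proposal is correct and follows essentially the paper's own route: the reduction via \cref{lemma:min-r-surrogate} to pointwise inequalities, the bound $g(\outside)\le\mean+\phedge\inspcost\le\ratio\mean$ (the paper's \cref{eq:lh:constraint_r_large}), your ``key bound'' $\ratio\,\phi(\outside/\ratio)\ge(1-\theta)\outside+\theta\phi(\outside)$, which is algebraically identical to the paper's slope bound \cref{eq:lh:slope_bound}, and the same final constraint $(\ratio-\phedge)\rv\ge(1-\phedge)\mean$ (\cref{eq:lh:constraint_r_small}). The only real difference is the finish: where the paper disposes of the remaining inequality in two lines (using $m(\outside)\le 1$ and $-\outside\le-\min\{\mean,\outside\}$, so that positivity of the right-hand side suffices), you run an explicit case analysis on $\theta$ versus $\phedge$ and $\outside$ versus $\bv$, with a second chord bound through $(\bv,\mean)$ and a monotonicity computation for $F$ — this is correct (and your constraints match the paper's exactly), just longer than necessary.
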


Before proving \autoref{lemma:local-ratio-approx}, let us dissect its statement. The approximation parameter \ratio(\phedge) in \autoref{eq:lh:ratio} admits a natural interpretation. The first term in the maximum describes the \dm's loss in the event the  local \phedge-hedging policy sets the \iitem\ to be a non-inspection \iitem: The \dm\ loses the option to inspect the \iitem\ and hence, the \iitem's reservation price.\footnote{Recall from the definition of the \iitem's \mand-\sur\ that the \iitem's reservation price is the lowest price the \dm\ can hope to obtain taking into account the \iitem's inspection cost.} The second term in the maximum describes the \dm's loss in the event the local \phedge-hedging policy sets the \iitem\ to be \amandator-inspection \iitem: The \dm\ loses the option to take the \iitem\ without inspection, and this loss is larger the larger the \iitem's inspection cost and/or the smaller the \iitem's expected price are. Thus, \autoref{eq:lh:ratio} states that the local \phedge-hedging policy's performance is better the smaller these losses are.

\begin{proof}[Proof of \cref{lemma:local-ratio-approx}]
  Throughout the proof, to simplify notation we omit the dependence of $\ratio(\phedge)$ on \phedge\ and simply denote it by \ratio.

  By \cref{definition:surrogate-hedge, lemma:min-r-surrogate}, showing \cref{eq:lh:goal} amounts to showing the following hold
  \[
    \label{eq:lh:r_small}
    \phedge \, \E{\min\{\surrogate, \outside\}} + (1 - \phedge ) \min\{\mean, \outside\}
    &\leq \E{\min\{\ratio \surrogate, \outside\}},
    \\
    \label{eq:lh:r_large}
    \phedge \, \E{\min\{\surrogate, \outside\}} + (1 - \phedge ) \min\{\mean, \outside\}
    &\leq \ratio \mean.
  \]

  We begin by showing \cref{eq:lh:r_large}. The left-hand side is increasing\footnote{%
    Here and throughout our proofs, we use increasing and decreasing in their weak senses, i.e., to mean ``nondecreasing'' and ``nonincreasing'', respectively.}
  in~\outside\ , but the right-hand side does not depend on~\outside\ . This means it suffices to show \cref{eq:lh:r_large} in the limit when $\outside \to \infty$ . Using that $\E{\surrogate} = \inspcost + \mean$ (see \cref{definition:surrogate}), this reduces to showing
  \[
    \label{eq:lh:constraint_r_large}
    \phedge  \inspcost\leq (\ratio- 1) \mean,
  \]
  which holds for the value $\ratio$ in \cref{eq:lh:ratio}.

  We now show \cref{eq:lh:r_small}. The main obstacle is giving a tight enough bound on $\E{\min\{\ratio \surrogate, \outside\}}$ in terms of $\E{\min\{\surrogate, \outside\}}$. The key observation is that $\E{\min\{\surrogate, \outside\}}$ is concave and increasing as a function of \outside. We also know the graph of the function goes through $(\rv, \rv)$ (see \cref{definition:surrogate}).
  This means that
  \[
    \label{eq:lh:slope}
    m(\outside) = \frac{\E{\min\{\surrogate, \outside\}} - \rv}{\outside - \rv},
  \]
  namely the slope of the line passing through $(\rv, \rv)$ and $(\outside, \E{\min\{\surrogate, \outside\}})$, is decreasing as a function of \outside\ . This implies the lower bound
  \[
    \E{\min\{\ratio \surrogate, \outside\}}
    &= \ratio \gp[\big]{\rv + m\gp[\big]{\tfrac{r}{\ratio}} \, \gp[\big]{\tfrac{r}{\ratio} - \rv}} \\*
    \label{eq:lh:slope_bound}
    &\geq \ratio\gp[\big]{\rv + m(\outside) \, \gp[\big]{\tfrac{r}{\ratio} - \rv}}.
  \]
  Applying \cref{eq:lh:slope, eq:lh:slope_bound} to \cref{eq:lh:r_small}, we find \cref{eq:lh:r_small} holds if
  \[
    m(\outside) \, \gp[\big]{(\ratio - \phedge ) \rv - (1 - \phedge ) r}
    \leq (\ratio - \phedge ) \rv- (1 - \phedge ) \min\{\mean, r\}.
  \]
  Because $m(\outside) \leq 1$ and $-r \leq -\min\{\mean, r\}$, it suffices for the right-hand side to be positive, as then dividing both sides by the right-hand side yields at most~$1$ on the left-hand side. A sufficient condition for the right-hand side to be positive is
  \[
    \label{eq:lh:constraint_r_small}
    (1 - \phedge ) \mean \leq (\ratio - \phedge) \rv,
  \]
  which holds for the value $\ratio$ in \cref{eq:lh:ratio}.
\end{proof}

\begin{proof}[Proof of \autoref{theorem:lh}]
  If $\rv \geq \mean$, then picking $\phedge  = 0$ yields $\ratio = 1$, so the interesting case is when $\rv < \mean$, or equivalently $\rv < \bv$. In this case, the value of $\phedge$ that minimizes $\ratio(\phedge)$ from \cref{lemma:local-ratio-approx} is the value that equalizes the branches of the maximum in \cref{eq:lh:ratio}, which is $\phedge$ from \cref{eq:lh:ratio-optimized}. Computing $\ratio(\phedge)$ then yields the value of $\ratio$ from \cref{eq:lh:ratio-optimized}.

  It remains only to show that the resulting value of $\ratio$ from \cref{eq:lh:ratio-optimized} is at most $\frac{4}{3}$. Let $x = \rv/\mean$. Because $\rv<\mean$, we have $x \in [0, 1)$. This means $\ratio$ is an increasing function of~$\inspcost$. From this and the fact that \cref{eq:reserve} implies $\inspcost < \rv$, we compute
  \[
    \ratio
    = \frac{\mean - \rv + \inspcost}{\mean - \rv + \inspcost x}
    < \frac{\mean - \rv + \rv}{\mean - \rv + \rv x}
    = \frac{1}{1 - x + x^2} \leq \frac{4}{3}.
    \qedhere
  \]
\end{proof}

\subsection{Tightness of local hedging's approximation ratio}\label{sec:bounds}


In this section, we address the question of how tight our bounds on local hedging's approximation ratio are. To frame the discussion more precisely, consider the \subproblem\ with a given \iitem, and define $\bestratio$ to be the minimum value of $\ratio$ such the \iitem\ admits a local \ratio-approximation. We answer the following questions:
\*[(Q1)] For all distributions~$\distribution$ and~$\inspcost$, do we have $\min_{\phedge \in [0, 1]} \ratio(\phedge) = \bestratio$? (Answer: no.)
\* For all $\mean$, $\inspcost$, and~$\rv$, does there exist a distribution~$\distribution$ resulting in the given mean and reservation price such that $\min_{\phedge \in [0, 1]} \ratio(\phedge) = \bestratio$? (Answer: yes.)
\* Do there exist $\inspcost$ and a distribution~$\distribution$ and such that $\bestratio \approx \frac{4}{3}$? (Answer: yes.)
\*/
We expand upon all three answers below. We restrict our attention to the $\rv < \bv$ case, because when $\rv \geq \bv$, we simply have $\ratio(0) = \bestratio = 1$, meaning inspection is never worthwhile. Afterwards, we discuss implications for single-item selection beyond the \subproblem.

The answer to (Q1) is no, but only because of one step in the proof. The value of $\ratio(\phedge)$ in \cref{eq:lh:ratio} is the minimum value that satisfies constraints \cref{eq:lh:constraint_r_large, eq:lh:constraint_r_small}. So (Q1) reduces to: must $\bestratio$ satisfy \cref{eq:lh:constraint_r_large, eq:lh:constraint_r_small}? One can show that \cref{eq:lh:constraint_r_large} is necessary by looking at the $r \to \infty$ limit, but \cref{eq:lh:constraint_r_small} is not. This is because the computation that leads to \cref{eq:lh:constraint_r_small} involves \cref{eq:lh:slope_bound}, an inequality which need not be tight.

The above discussion suggests an answer to (Q2): we have $\bestratio = \min_{\phedge \in [0, 1]} \ratio(\phedge)$ if \cref{eq:lh:slope_bound} holds with equality. In fact, a closer inspection of the proof reveals that we only need \cref{eq:lh:slope_bound} to be tight when $r = \mean$. A straightforward computation shows that this holds when
\[
  \label{eq:bad-bimodal}
  \price =
  \begin{cases}
    0 & \textnormal{with probability } \frac{\inspcost}{\rv} \\
    \frac{\rv \mean}{\rv - \inspcost} & \textnormal{with probability } \frac{\rv - \inspcost}{\rv}.
  \end{cases}
\]

The example in \cref{eq:bad-bimodal} also addresses (Q3). If we choose $\mean = 2 \rv = (2 - \epsilon) \inspcost$ in \cref{eq:bad-bimodal}, then by the above discussion and \cref{eq:lh:ratio-optimized}, we have $\bestratio = \max_{\phedge \in [0, 1]} \ratio(\phedge) = \frac{4 + \epsilon}{3 + \epsilon}$, where $\epsilon$ may be arbitrarily small. The fact that the worst-case scenario is when $\price$ is a high-variance two-point distribution is unsurprising in light of the results of \citet{beyhaghi2019pandora}, who use a similar construction to bound the approximation ratio of committing policies in the reward maximization setting.

Finally, let us zoom out from the \subproblem\ to full single-item selection. In some sense, our answer to (Q3) yields single-item selection problems for which local hedging is arbitrarily close to a $\frac{4}{3}$-approximation, because the \subproblem\ is a special case of single-item selection. But this is somewhat unsatisfying given the optimal policy for the \subproblem\ is known. We can obtain a more satisfying example by using two \items. \Item~1 is as in \cref{eq:bad-bimodal} and the above answer to (Q3), and \iitem~2 has inspection cost~$\inspcost_2 = \epsilon$ and has hidden price equally likely to be $\price_2 = \mu_1$ or $\price_2 = \mu_1/\epsilon$. For small enough~$\epsilon$, it is clearly optimal to first inspect \iitem~2, after which the problem becomes a \subproblem\ with \iitem~1 and outside option~$\price_2$. Following \cref{lemma:one-box}, the optimal policy then either inspects the first \iitem\ (if $\price_2 = \mu_1$) or selects the first \iitem\ without inspection (if $\price_2 = \mu_1/\epsilon$). Local hedging, and indeed any committing policy, does worse because it must label \iitem~1 as \mandator-inspection or non-inspection before learning~$\price_2$. This construction closely mirrors that of \citet[Example~1]{beyhaghi2019pandora}.



\section{Combinatorial Pandora's box problems}\label{sec:combinatorial}

We show in this section how local hedging can also be used to provide approximately optimal policies in combinatorial versions of the \optiona\ inspection problem. To this end, we consider the model from \autoref{sec:model} with two changes.

First, the \dm's task is to select not necessarily just one~\iitem, but a set of \items\ satisfying some constraints. Below, we denote by $\selectedset = \curlgp{\itemindex \in [N] \mid \selecteditem = 1}$ the \dm's selected set of items. The \dm's choice must satisfy certain \emph{feasibility} constraints. We encode the constraints via a set of feasible sets of items, $\constraints\subseteq 2^{\itemsset}\setminus\emptyset$, and the \dm's choice must be an element of \constraints.
Thus, the process of inspecting and selecting \items\ continues until the \dm\ has selected a feasible set of \items. We call $\constraints$ the problem's \emph{constraints}. We assume $\constraints$ is upward closed, meaning $\selectedset' \supseteq \selectedset \in \constraints$ implies $\selectedset' \in \constraints$.

Second, the \dm's total cost may depend not just on the costs paid to inspect and select boxes, but also on the selected set~$\selectedset$. In particular, a \emph{terminal cost} function $\terminalcost : \constraints \to \reals_{\geq 0}$ exists such that the \dm's total cost is
\[
  \realizedcost
  = \sum_{\itemindex\in\itemsset} \gp{\selecteditem \priceitem + \inspecteditem \inspcostitem}
    + \terminalcost(\selectedset).
\]
We call the pair $(\constraints, \terminalcost)$ a \emph{combinatorial model}, as the constraints and terminal cost function together encode all of the combinatorial structure. The model in \cref{sec:model}, where the \dm\ selects exactly one item, corresponds to $\constraints = \curlgp{\{\itemindex\} \given \itemindex \in \itemsset}$ and $\terminalcost(\selectedset) = 0$.

A combinatorial model $(\constraints,\terminalcost)$ together with the price distribution \distributionitem\ and the inspection costs \inspcostitem\ defines an \emph{instance} of a \emph{combinatorial \optiona\ inspection problem}.  If we additionally impose the constraint that the \dm\ can only select inspected \items, i.e. $\selecteditem \leq \inspecteditem$, we obtain an instance of a combinatorial \emph{\mandator} inspection problem. We let $\optpol(\constraints, \terminalcost)$ (resp., $\mandpol(\constraints, \terminalcost)$) denote the set of policies for \optiona\ (resp., \mandator) inspection problems with combinatorial model $(\constraints, \terminalcost)$.

\paragraph{Organization of the rest of this section}
\Cref{sec:combinatorial:lower-bound} generalizes our lower bound, \cref{theorem:cost-opt}, to combinatorial selection. \Cref{sec:combinatorial:lh} similarly generalizes our upper bound on local hedging, \cref{theorem:lh-pandora}, to combinatorial selection. \Cref{sec:combinatorial:examples} combines local hedging with the \mandator\ inspection policies of \citet{singla2018price} to obtain the first approximation algorithms for combinatorial selection under \optiona\ inspection.

\subsection{Lower bound on optimal cost in the combinatorial setting}
\label{sec:combinatorial:lower-bound}

We now give a lower bound on the optimal expected cost for combinatorial selection. Like the single-item selection case, our lower bound is based on the expected value of a one-shot problem using \opt-surrogate prices. In single-item selection, the one-shot problem is simply taking the minimum of the \opt-surrogate prices. In combinatorial selection, the one-shot problem is instead a one-shot version of the optimization problem induced by the combinatorial model, as captured by the following definition.

\begin{definition}[Surrogate cost]
  \label{definition:realized-cost-surrogate}
  The \emph{optimal \mand-surrogate cost}, denoted~$\realizedcostsurrogate$, is
  \begin{align}\label{eq:surrogate-cost-opt}\tag{\opt-SC}
    \realizedcostsurrogateopt
    = \min_{\selectedset \in \constraints}{} \gp*{\sum_{\itemindex \in \selectedset} \surrogateoptitem + \terminalcost(\selectedset)}.
\end{align}
  Similarly, we define optimal \mand-surrogate and \hedge-surrogate costs, denoted $\realizedcostsurrogate$ and $\realizedcostsurrogatehedge$, by replacing $\surrogateoptitem$ with $\surrogateitem$ and $\surrogatehedgeitem$, respectively.
\end{definition}
To understand the connection between the surrogate cost and the surrogate prices in \autoref{sec:lower-bound}, consider the single-item selection case, in which the constraint set consists of all the singletons and the terminal cost function is identically $0$. In that case, the expression in \autoref{eq:surrogate-cost-opt} reduces to
\[
\realizedcostsurrogateopt = \min_{\itemindex\in\itemsset}\surrogateoptitem,
\]
which is the cost of the one-shot problem from \cref{theorem:cost-opt}.
More generally, \autoref{eq:surrogate-cost-opt} describes the optimal cost of a one-shot problem when prices are given by $\surrogateoptitem$ and the \dm\ incurs no inspection costs, but does incur a terminal cost when selecting a set of items. In that case, the \dm\ will inspect all items--as learning their prices is free--and then select the minimum cost set.


Like in \autoref{sec:lower-bound}, the \opt-surrogate cost provides a lower bound for the combinatorial \optiona\ inspection model.
\restatably
\begin{theorem}[Combinatorial selection lower bound]
  \label{theorem:combinatorial-cost-opt}
  Consider combinatorial selection with model $(\constraints, \terminalcost)$ under \optiona\ inspection. The optimal policy's expected total cost satisfies
  \[
    \min_{\policy \in \optpol(\constraints, \terminalcost)} \E{\policycost} \geq \E{\realizedcostsurrogateopt}.
  \]
\end{theorem}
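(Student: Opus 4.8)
The plan is to generalize the submartingale argument behind \cref{theorem:cost-opt}. I would fix an arbitrary policy $\policy \in \optpol(\constraints, \terminalcost)$, let $\tau$ denote the (almost surely finite) time at which $\policy$ first holds a feasible selected set, and construct a submartingale $K(t)$ satisfying $K(0) = \E{\realizedcostsurrogateopt}$ and $\E{K(\tau)} \leq \E{\policycost}$. The optional-stopping argument from \cref{proof:cost-opt} then yields $\E{\policycost} \geq \E{K(\tau)} \geq K(0) = \E{\realizedcostsurrogateopt}$, and minimizing over $\policy$ completes the proof.

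The process $K(t)$ should track the optimal \opt-surrogate cost of the \emph{residual} combinatorial problem faced by $\policy$ at time~$t$. Let $\selectedset(t)$ be the set of \items\ selected during $\{0, \dots, t - 1\}$, let $\realizedcost(t)$ be the inspection costs and selection prices paid during $\{0, \dots, t - 1\}$ (with the convention that an \iitem~$\itemindex$ selected without inspection contributes $\meanitem = \E{\priceitem}$, which is equivalent in expectation), and let $\filtration(t)$ be the policy's information at time~$t$. Set
\[
  \surrogatebase_\itemindex(t) =
  \begin{cases}
    \priceitem & \text{if \iitem~$\itemindex$ has been inspected by time~$t$,} \\
    \surrogateoptitem & \text{otherwise,}
  \end{cases}
\]
and define the residual surrogate cost and potential by
\[
  \realizedcostsurrogatebase(t) = \min_{\selectedset \supseteq \selectedset(t),\, \selectedset \in \constraints} \gp*{\sum_{\itemindex \in \selectedset \setminus \selectedset(t)} \surrogatebase_\itemindex(t) + \terminalcost(\selectedset)},
  \qquad
  K(t) = \realizedcost(t) + \E{\realizedcostsurrogatebase(t) \given \filtration(t)}.
\]
In the single-item model this reduces to the process used in \cref{proof:cost-opt}. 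Since nothing has been touched at $t = 0$, we get $K(0) = \E{\realizedcostsurrogateopt}$. At time $\tau$ we have $\selectedset(\tau) \in \constraints$, so taking $\selectedset = \selectedset(\tau)$ in the minimum gives $\realizedcostsurrogatebase(\tau) \leq \terminalcost(\selectedset(\tau))$, hence $K(\tau) \leq \realizedcost(\tau) + \terminalcost(\selectedset(\tau))$, and taking expectations yields $\E{K(\tau)} \leq \E{\policycost}$.

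The heart of the proof is the submartingale property, which I would verify by examining the effect of each elementary action of $\policy$ at a time~$t$ (once $\policy$ terminates, $K$ is frozen): inspecting one uninspected \iitem, selecting one already-inspected \iitem, or selecting one uninspected \iitem\ without inspection. The key observation is that such an action touches only a single \iitem~$\itemindex \notin \selectedset(t)$, so one may split the minimum defining $\realizedcostsurrogatebase(t)$ by whether $\itemindex \in \selectedset$ and write $\realizedcostsurrogatebase(t) = \min\{A, \surrogatebase_\itemindex(t) + B\}$, where $A$ (resp.\ $B$) is the least residual cost over feasible completions of $\selectedset(t)$ that exclude (resp.\ include)~$\itemindex$. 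Here $B < \infty$ because $\itemsset \in \constraints$, and because $A$ and $B$ involve only \items\ other than~$\itemindex$, they are unaffected by the action and independent of the fresh randomness of~$\priceitem$. Writing $\outside = A - B$, so that $\realizedcostsurrogatebase(t) = B + \min\{\outside, \surrogatebase_\itemindex(t)\}$, each case reduces to a \subproblem\ fact from \cref{lemma:min-r-surrogate}: inspecting~$\itemindex$ costs~$\inspcostitem$ and replaces $\surrogateoptitem$ by $\priceitem$, so the needed inequality is $\inspcostitem + \E{\min\{\priceitem, \outside\}} \geq \E{\min\{\surrogateoptitem, \outside\}}$, which holds since inspecting is a feasible action in the \subproblem; selecting~$\itemindex$ without inspection costs~$\meanitem$ in expectation and removes~$\itemindex$ from the residual minimum, leaving~$B$, so the needed inequality is $\meanitem \geq \E{\min\{\surrogateoptitem, \outside\}}$, which holds since blind selection is a feasible action in the \subproblem; and selecting an already-inspected~$\itemindex$ yields $K(t + 1) \geq K(t)$ pointwise, since $\priceitem$ is then known and $\min\{A, \priceitem + B\} \leq \priceitem + B$. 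The boundary case $A = +\infty$, in which the constraints force $\itemindex$ into every feasible completion, is handled identically using the identity $\E{\surrogateoptitem} = \meanitem$, which follows from \cref{lemma:min-r-surrogate} by letting $\outside \to \infty$.

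I expect the main obstacle to be this submartingale verification---specifically, cleanly isolating the contribution of the single touched \iitem\ so that the one-\iitem\ inequalities of \cref{lemma:min-r-surrogate} apply, and tracking measurability, since $A$ and $B$ depend on the not-yet-revealed \opt-surrogate prices of the untouched \items\ and a blind-selected \iitem\ contributes a price the policy never observes. The integrability and almost-sure-finiteness conditions needed to apply optional stopping are routine and mirror the single-item case.
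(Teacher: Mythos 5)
Your proposal is correct and follows essentially the same route as the paper's proof in \cref{appendix:omitted}: a submartingale $K(t) = \realizedcost(t) + \E{\realizedcostsurrogatebase(t)\given\filtration(t)}$ built from a residual \opt-surrogate cost, verified step-by-step by isolating the single touched item~$m$ via the decomposition $\realizedcostsurrogatebase(t) = \min\{A, B + \surrogatebase_m(t)\} = B + \min\{\surrogatebase_m(t), A - B\}$ (the paper's $X_{\neq m}(t)$, $Y_{\neq m}(t)$, $R_{\neq m}(t)$) and invoking \cref{lemma:min-r-surrogate}. The only presentational differences are that you invoke optional stopping at $\tau$ rather than observing the process is frozen after a bounded number of rounds, and you exclude already-selected items from the residual sum rather than zeroing their surrogates, which are equivalent bookkeeping choices.
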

We prove \cref{theorem:combinatorial-cost-opt} in \cref{appendix:omitted}. The proof follows the same outline as that of \cref{theorem:cost-opt}, namely finding a suitable submartingale, with some extra complications due to the combinatorial nature of the problem.


\paragraph{\Mandator\ inspection}
In light of \autoref{proposition:no-price-info}, it is natural to ask whether an analogous result exists for the combinatorial \mandator\ inspection problem. \Citet[Lemma~2.2]{singla2018price} shows that the \mand-surrogate cost is a lower bound to the optimal cost under \mandator\ inspection, that is,
\[
  \label{eq:surrogate-cost-mand}
  \min_{\policy\in\mandpol} \E{\policycost} \geq \E{\realizedcostsurrogate}.
\]
In other words, in the combinatorial model, surrogate costs provide a benchmark against which to compare different policies, but even in the \mandator\ inspection model, they cease to be a tight benchmark for the optimal cost.

\subsection{Local hedging for combinatorial \optiona\ inspection}\label{sec:combinatorial:lh}

We show in this section how to extend the local hedging policy from single-item selection to combinatorial \optiona\ inspection. Underlying the logic of local hedging in single-item inspection is knowledge of a policy for \mandator\ inspection---the so called Weitzman's rule---that applies across all instances of the single-item \mandator\ inspection. In the combinatorial case, local hedging builds on the existence of such policies for combinatorial \mandator\ inspection problems, such as the frugal algorithms of \citet{singla2018price}.

\begin{definition}[Local hedging in combinatorial \optiona\ inspection]\label{definition:clh}
Fix a $\policy\in\mandpol(\constraints, \terminalcost)$ for combinatorial \mandator\ inspection and a vector $\mathbf{\phedge}=(\phedge_1,\dots,\phedge_N)$ of hedging probabilities.  The \emph{$\policy$ with local $\mathbf{\phedge}$-hedging policy}, denoted $\hedgepi\in\optpol(\constraints, \terminalcost)$ (we leave the $\mathbf{p}$ implicit), is the following two-stage policy:
\* Using the hedging probabilities $\mathbf{\phedge}$, the \dm\ determines the set of inspection and non-inspection \items.
\* The \dm\ then runs policy \policy\ on the resulting combinatorial \mandator\ inspection problem.
\*/
\end{definition}



Our main result below shows, roughly speaking, that if $\pi$ is a $\approxcombo$-approximation for combinatorial selection under \mandator\ inspection, and if all \items\ admit a local \ratio-approximation, then $\hedgepi$ is a $\ratio \approxcombo$-approximation for combinatorial selection under \optiona\ inspection. However, it turns out we need a slightly stronger hypothesis than $\hedgepi$ being a $\approxcombo$-approximation relative to the optimal policy's expected cost, namely the left-hand side of \cref{eq:surrogate-cost-mand}. Instead, we need it to be a $\approxcombo$-approximation relative to the expected \emph{\opt-surrogate cost}, namely the right-hand side of \cref{eq:surrogate-cost-mand}. Fortunately, as we discuss in \cref{sec:combinatorial:examples} below, all the results of \citet{singla2018price} yield approximation algorithms relative to this stricter baseline.

\restatably
\begin{theorem}\label{theorem:lh-combinatorial}
  Let $\policy \in \mandpol(\constraints, \terminalcost)$ be a policy for combinatorial \mandator\ inspection for a given combinatorial model $(\constraints, \terminalcost)$. Suppose that for all hidden price distributions and inspection costs, $\policy$ satisfies $\E{\policycost} \leq \approxcombo \E{\realizedcostsurrogate}$. Then for all hidden price distributions and inspection costs, if all \items\ admit a local \ratio-approximation, then using $\hedgepi$ with the corresponding hedging probabilities yields expected cost bounded by
  \[
    \E{\policycost[\hedgepi]}
    \leq \ratio \approxcombo \E{\realizedcostsurrogateopt}
    \leq \ratio \approxcombo \min_{\policy' \in \optpol(\constraints, \terminalcost)} \E{\policycost[\policy']}.
  \]
  In particular, $\hedgepi$ always yields a $\frac{4}{3} \approxcombo$-approximation for combinatorial selection with \optiona\ inspection.
\end{theorem}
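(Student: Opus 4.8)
The plan is to lift the telescoping argument behind \cref{theorem:lh-pandora} from surrogate \emph{prices} to surrogate \emph{costs}: I would establish the chain
  \[
    \E{\policycost[\hedgepi]}
    \;\le\; \approxcombo\,\E{\realizedcostsurrogatehedge}
    \;\le\; \ratio\approxcombo\,\E{\realizedcostsurrogateopt}
    \;\le\; \ratio\approxcombo \min_{\policy'\in\optpol(\constraints,\terminalcost)}\E{\policycost[\policy']},
  \]
  where the first inequality comes from applying the hypothesis on $\policy$ to the combinatorial \mandator\ inspection instance that local hedging induces, the second is a per-\iitem\ replacement of \hedge-surrogate prices by scaled \opt-surrogate prices, and the third is \cref{theorem:combinatorial-cost-opt}. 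The ``in particular'' then follows because \cref{theorem:lh} guarantees that every \iitem\ admits a local $\ratio$-approximation with $\ratio\le\frac43$, so one may take $\ratio=\frac43$.

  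For the first inequality, I would condition on the labeling $\ell$ that $\hedgepi$ draws in its first stage. Given $\ell$, the second stage runs $\policy$ on a fixed combinatorial \mandator\ inspection instance with model $(\constraints,\terminalcost)$ in which an \iitem\ labeled \mandator-inspection keeps its parameters $(\distributionitem,\inspcostitem)$, while an \iitem\ labeled non-inspection may, without changing expected cost, be replaced by a deterministic \iitem\ of price $\meanitem$ and inspection cost~$0$ (since $\policy$ never observes such an \iitem's realized price, its selection decision is independent of it, so only $\meanitem$ enters in expectation). The hypothesis on $\policy$, applied to this instance, yields $\E{\policycost[\hedgepi]\given\ell}\le\approxcombo\,\E{\realizedcostsurrogate(\ell)\given\ell}$, where $\realizedcostsurrogate(\ell)$ is the optimal \mand-surrogate cost of the labeled instance, built from the \iitem\ surrogate prices $\surrogateitem$ (when $\itemindex$ is labeled \mandator-inspection) and $\meanitem$ (when labeled non-inspection). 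Averaging over $\ell$ and observing that, over the joint randomness of the coins and the prices, the vector of labeled surrogate prices has exactly the law of $(\surrogatehedgeitem)_{\itemindex\in\itemsset}$ (because the coins are independent across \items), we obtain $\E{\policycost[\hedgepi]}\le\approxcombo\,\E{\realizedcostsurrogatehedge}$.

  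For the second inequality, I would interpolate between $\realizedcostsurrogatehedge$ and $\ratio\,\realizedcostsurrogateopt$ by swapping out one \iitem's surrogate at a time, exactly as in the telescoping display in the proof of \cref{theorem:lh-pandora}. Define
  \[
    X_k = \E*{\min_{\selectedset\in\constraints}\Bigl(\sum_{\itemindex\in\selectedset,\,\itemindex\le k}\ratio\surrogateoptitem + \sum_{\itemindex\in\selectedset,\,\itemindex>k}\surrogatehedgeitem + \terminalcost(\selectedset)\Bigr)},
  \]
  so that $X_0=\E{\realizedcostsurrogatehedge}$ and, using $\terminalcost\ge0$ and $\ratio\ge1$, $X_N\le\ratio\,\E{\realizedcostsurrogateopt}$. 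It remains to show $X_{k-1}\le X_k$ for each $k\in\itemsset$. Fix the randomness of all \items\ other than~$k$. As a function of the value $w$ placed on \iitem~$k$, the inner minimization equals $\min\{a,\,w+b\}$, where $a\in[0,\infty]$ is the best objective over feasible sets avoiding~$k$ and $b\in[0,\infty)$ is the best objective over feasible sets containing~$k$ with \iitem~$k$'s own contribution removed; both are functions of the other \items\ only. Writing $\outside=a-b$, this function is $b+\min\{\outside,\,w\}$, so $X_{k-1}=\E{b+\min\{\outside,\,\surrogatehedge[\phedge_k]_k\}}$ and $X_k=\E{b+\min\{\outside,\,\ratio\surrogateopt_k\}}$. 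Since \iitem~$k$'s randomness is independent of $(\outside,b)$, conditioning on the latter reduces $X_{k-1}\le X_k$ to the inequality $\E{\min\{\surrogatehedge[\phedge_k]_k,\outside\}}\le\E{\min\{\ratio\surrogateopt_k,\outside\}}$ holding for every fixed value of $\outside\in\reals$ (the degenerate case $\outside=+\infty$, which arises when every feasible set contains~$k$, follows by letting $\outside\to\infty$). This is precisely the defining inequality \cref{eq:lh:goal} of a local $\ratio$-approximation for \iitem~$k$, which holds by assumption.

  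I expect the second inequality to be the crux. The key point is that, once the other \items\ are frozen, the combinatorial one-shot problem collapses to exactly the \subproblem\ of \cref{sec:subproblem-definition}, with the optimal completion cost $\outside$ in the role of the outside option. This is where it is essential that local $\ratio$-approximation does \emph{not} inflate $\outside$ by $\ratio$ (as stressed after \cref{definition:local-ratio-approx}): we scale only one \iitem's surrogate at a time, leaving the ``outside option'' formed by the remaining \items\ untouched. By comparison, the first inequality is essentially bookkeeping once one invokes the reduction of non-inspection \items\ to deterministic \items, and the third is immediate from \cref{theorem:combinatorial-cost-opt}.
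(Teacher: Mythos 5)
Your proof is correct and follows essentially the same approach as the paper: the same three-step chain (apply the hypothesis on $\policy$ to the randomly labeled \mandator\ inspection instance, swap one \iitem's surrogate at a time via the $X_{\neq m}/Y_{\neq m}$ decomposition and reduce to the defining inequality of local $\ratio$-approximation, and close with \cref{theorem:combinatorial-cost-opt}). You are in fact slightly more careful than the paper at the final link of the telescoping chain, correctly observing that $X_N \leq \ratio\,\E{\realizedcostsurrogateopt}$ requires $\terminalcost \geq 0$ and $\ratio \geq 1$ because the terminal cost is not inflated --- the paper's displayed chain writes an equality there, which is only an inequality when $\terminalcost \neq 0$.
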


We defer the proof of \cref{theorem:lh-combinatorial} to \cref{appendix:omitted}, giving a brief outline below. There are two main steps. The first step is to show
\[
  \E{\policycost[\hedgepi]} \leq \approxcombo \E{\realizedcostsurrogatehedge}.
\]
This follows immediately from \cref{eq:surrogate-cost-mand}, the assumption that $\pi$ is a $\beta$-approximation relative to the expected \mand-surrogate cost under \mandator\ inspection, and the fact that after labeling items, local hedging transforms the problem into \amandator\ inspection problem. This leaves the second step, which is to show
\[
  \E{\realizedcostsurrogatehedge} \leq \alpha \E{\realizedcostsurrogateopt},
\]
after which \cref{theorem:combinatorial-cost-opt} completes the proof. The second step generalizes the main task in the proof of \cref{theorem:lh}, which is to show
$
  \E{\min_{\itemindex \in \itemsset} \surrogatehedgeitem} \leq \ratio \E{\min_{\itemindex \in \itemsset} \surrogateoptitem}
$.
The single-item selection proof proceeds by, roughly speaking, replacing the \hedge-surrogate prices with \ratio-inflated \opt-surrogate prices one by one. Essentially the same procedure works for combinatorial selection. The main subtlety is that each surrogate price replacement might change the minimizing set of items $\selectedset \in \constraints$ (see \cref{definition:realized-cost-surrogate}). The key idea is to express the surrogate cost in terms of a minimum of one \iitem's surrogate price and a quantity that depends only on other \items' surrogate prices.

\subsection{Extending \mandator\ inspection results of \citet{singla2018price} to \optiona\ inspection}
\label{sec:combinatorial:examples}

\autoref{theorem:lh-combinatorial} leaves open the question of whether policies exists for the \mandator\ combinatorial inspection model that are good approximations of the optimal policy. It turns out that the answer is yes for a number of fundamental problems like matroid basis, set cover, facility location, Steiner-tree, and feedback vertex set. Indeed, \citet[Theorem~1.2]{singla2018price} constructs policies satisfying the precondition of \autoref{theorem:lh-combinatorial} for these and other combinatorial models in the \emph{\mandator\ inspection} setting. Combining these with local hedging yields approximation algorithms for several combinatorial models, as summarized in \autoref{tab:singla}.

\begin{remark}[Baseline of \citeauthor{singla2018price}'s results \citep{singla2018price}]
  While the main result of \citet[Theorem~1.2]{singla2018price} is stated as comparing a policy's performance to that of the optimal \mandator\ inspection policy, namely $\min_{\policy \in \mandpol(\constraints, \terminalcost)} \E{\policycost[\policy]}$, inspecting the proof \citep[Section~3.2]{singla2018price} reveals that all the approximation ratios actually hold relative to the expected \mand-surrogate cost $\E{\realizedcostsurrogate}$, as required by our \cref{theorem:lh-combinatorial}.
\end{remark}


%




\begin{table}
\centering
\caption{Approximation ratios achieved by combining local hedging with policies of \citet{singla2018price} for different \mandator\ inspection problems}
\label{tab:singla}
\begin{tabular}{@{}lll@{}}
\toprule
Problem \& \constraints & Terminal cost $\terminalcost$ & Approximation ratio $\frac{4}{3} \approxcombo$ \\
\midrule
Min-cost matroid basis & $0$ & $\frac{4}{3}$ \\
Min-cost set cover & $0$ & $\min\curlgp[\big]{O(\log n), \frac{4}{3}f}$ \\
Min-cost feedback vertex set & $0$ & $O(\log n)$ \\
Facility location ($\constraints = 2^{\itemsset} \setminus \{\emptyset\}$) & $\sum_{\itemindex\in\itemsset}\min_{s\in\selectedset}d(\itemindex,s)$ & $2.4814$ \\
Steiner tree ($\constraints = 2^{\itemsset}$) & $\textrm{Min-Steiner-Tree}(\itemsset\setminus\selectedset)$ & $4$ \\
\bottomrule
\end{tabular}
\end{table}

As \autoref{tab:singla} illustrates, \autoref{theorem:lh-combinatorial} allows us to tackle \optiona\ inspection combinatorial models in a variety of settings, which have been largely unexplored because of the difficulties introduced by \optiona\ inspection already in the single-item selection case. Consider for instance the uncapacitated \emph{facility location} problem, in which given a graph with vertices $\itemsset$ and edges $E$, the \dm\ must choose a set of locations \selectedset\  at which to open facilities. The \dm\ wants to minimize the cost of the opened facilities, while at the same time minimizing the distance $d:\itemsset \times \itemsset \mapsto\mathbb{R}$ of the facilities to those locations at which no facilities are opened.

Another notable family of combinatorial problems is min-cost matroid basis, of which minimum spanning tree is a special case. Because the optimal algorithm for deterministic min-cost matroid basis is greedy, \citeauthor{singla2018price}'s results yield an algorithm with $\beta = 1$, and hence local hedging achieves an approximation ratio of at most~$\frac{4}{3}$.

We refer the reader to \citet{singla2018price} for detailed descriptions of the problems in \cref{tab:singla}. We emphasize that the power of our approach is not that it addresses any particular combinatorial problem, but rather its \emph{compositionality}: local hedging naturally combines with algorithms for \mandator\ inspection problems and translates both the algorithms and their performance guarantees to \mandator\ inspection problems, provided the performance guarantees are relative to expected \mand-surrogate cost.

\section{Conclusion and Discussion}\label{sec:conclusions}

In this work, we introduce a new approach to approximately solving Pandora's box problems with \optiona\ inspection. Our approach, \emph{local hedging}, maintains the simplicity and compositionality of the elegant policies available for Pandora's box problems with \mandator\ inspection. One can view local hedging as a randomized reduction that turns \optiona\ inspection problems into \mandator\ inspection problems. The result is the first approximation algorithms for combinatorial Pandora's box problems under \optiona\ inspection.

We believe the local hedging technique has potential to be used beyond the setting of this paper. In the rest of this section, we outline the possibilities and obstacles to using local hedging in two additional settings. \Cref{sec:conclusions:rewards} discusses \emph{reward maximization} Pandora's box problems, in contrast to the cost minimization setting we focus on. \Cref{sec:conclusions:superprocesses} is \emph{Markovian bandit superprocesses}, which significantly generalize Pandora's box models.

\subsection{Reward maximization}
\label{sec:conclusions:rewards}

In Pandora's box problems in the reward maximization setting, instead of each \iitem\ having a hidden price, each \iitem\ has a \emph{hidden reward}. The objective is to maximize expected reward of selected items minus inspection costs (possibly plus a terminal reward in the combinatorial setting).

The core Pandora's box definitions easily translate between the cost and reward settings. The rule of thumb is that one can recover definitions for rewards by interpreting them as negative costs, and vice versa. We do this, for instance, when translating \citeauthor{doval2018whether}'s characterization of the \subproblem\ \citep[Proposition~0]{doval2018whether} from rewards to costs. The definitions of local hedging and local approximation can similarly be translated from costs to rewards. More generally, if one allows negative hidden prices or hidden rewards, then the two settings are essentially the same.

While allowing negative costs does not fundamentally alter any of our definitions, it does have an impact on our main local approximation result. Recall that \cref{theorem:lh} states that any \iitem\ admits a local $\frac{4}{3}$-approximation. However, the proof of the $\frac{4}{3}$ upper bound relies on the fact that an \iitem's reservation price is bounded below, specifically $\rv \geq \inspcost$. When one allows negative costs, there is no longer such a lower bound, and an \iitem's reservation price may be arbitrarily negative.

Translating the above discussion to the rewards setting, an \iitem's \emph{reservation value} (namely its negative reservation price) may be an arbitrarily large relative to its mean value and inspection cost. Consequently, when translating the proof of \cref{theorem:lh} to the rewards setting, one finds that the worst possible approximation ratio is only~$\frac{1}{2}$. This is a disappointing result, because \citet{beyhaghi2019pandora} point out that the following trivial randomized committing policy is a $\frac{1}{2}$-approximation in the rewards case: with equal probability, commit to either never inspecting any \items\ or never selecting any \items\ without inspection.\footnote{%
  Curiously, we are not aware of a similar trivial randomized algorithm for the costs setting.}
So while local hedging may still yield a superior result for \items\ that admit local $\ratio$-approximations for $\ratio > \frac{1}{2}$, it appears that simple local hedging alone cannot replicate the best simple approximation algorithms for the rewards setting, which achieve a $\frac{4}{5}$-approximation \citep{guha2008information}.

It is an interesting open question whether the definition of local approximation can be altered in a way that enables improved guarantees in the rewards setting. One idea would be to combine the multiplicative suboptimality factor currently considered with an additive suboptimality gap, which we suspect could rule out the worst-case examples that admit only local $\frac{1}{2}$-approximation. A follow-up to this work \citep{chawla2024combinatorial} has shown this is in principle possible, resulting in a $0.582$-approximation for combinatorial selection with matroid constraints.

\subsection{Bandit superprocesses}
\label{sec:conclusions:superprocesses}

As \citet{doval2018whether} demonstrates through multiple examples, the core reason why Pandora's box problems are harder under \optiona\ inspection than \mandator\ inspection is that \optiona\ inspection gives the \dm\ two possible actions to take on each box, which makes the problem a \emph{Markovian bandit superprocess}. We see potential for applying local hedging to certain cases of this more general superprocess setting.

We begin with some background. Roughly speaking, a \emph{Markovian bandit process} is a type of Markov decision process in which the \dm\ must at each time step choose between advancing one of multiple independent Markov chains \citep{gittins2011multi}. The traditional setting in which bandit processes are studied is infinite-horizon discounted reward problems \citep{weber1992gittins}, but variants exist for unconstrained-but-finite undiscounted problems, like Pandora's box problems. One example is the model of \citet{dumitriu2003playing}, which, aside from the assumption of discrete state spaces, is a multistage generalization of single-item selection in the cost minimization setting.

Many definitions and results available for Pandora's box problems under \mandator\ inspection translate to bandit processes. Most importantly, reservation prices (or reservation values) are a special case of \emph{Gittins indices}, which form the basis of optimal index policies for many varieties of bandit process \citep{gittins2011multi}. Gittins indices share the simplicity and compositionality of reservation prices. For instance, \citet{gupta2019markovian} generalize the combinatorial selection results of \citet{singla2018price} to a more general model resembling that of \citet{dumitriu2003playing}. The core reason why these results work is that even though the \dm\ has many choices of Markov chains to advance at each time step, only one action, namely ``advance'', is possible within each Markov chain.

Markovian bandit \emph{superprocesses} generalize bandit processes by replacing the independent Markov chains with independent Markov \emph{decision processes}. That is, now each of the independent processes may present the \dm\ with multiple possible actions. In Pandora's box problems, uninspected \items\ under \optiona\ inspection are an example of this, as they allow either inspection or immediate selection. While one can define Gittins indices for bandit superprocesses, unlike the simple bandit process case, they generally do not yield optimal policies. The only known exception is when each Markov decision process satisfies a condition known as \emph{Whittle's condition} \citep{whittle1980multi, glazebrook1982sufficient}. Roughly speaking, a Markov decision process satisfies Whittle's condition if one would be willing to commit to a state-to-action mapping ahead of time, then always use that to determine which action to play, regardless of the states of other Markov decision processes in the bandit superprocess.

Our notion of local $\alpha$-approximation can be viewed as a \emph{novel relaxation of Whittle's condition}. Specifically, in our Pandora's box setting, satisfying Whittle's condition amounts to admitting a local $1$-approximation. One can easily generalize the definition of local approximation to the more general superprocess setting, although there are some subtleties to work out about what types of randomness should be allowed in the state-to-action mapping. A follow-up to this work \citep{chawla2024combinatorial} has shown that analogues of our \cref{theorem:lh-pandora, theorem:lh-combinatorial} hold in a more general model that captures many variants on Pandora's box beyond \optiona\ inspection, with essentially the same proof outline.

Because local approximations relax Whittle's condition, the set of Markov decision processes that admit local $\alpha$-approximations for values of~$\alpha$ reasonably close to~$1$ is likely to be much richer than those satisfying Whittle's condition. For instance, in Pandora's box with \optiona\ inspection, the only \items\ that satisfy Whittle's condition are those for which inspection is never worthwhile \citep{doval2018whether}. In contrast, all \items\ admit local $\frac{4}{3}$-approximations, and a follow-up to this work \citep{chawla2024combinatorial} has identified other variants of Pandora's box that always admit good local approximations. We therefore believe that local hedging and local approximations give a promising new angle of attack for deriving approximation algorithms for bandit superprocesses.

\bibliographystyle{ACM-Reference-Format}
\bibliography{pandora}


\begin{thebibliography}{29}


\ifx \showCODEN    \undefined \def \showCODEN     #1{\unskip}     \fi
\ifx \showDOI      \undefined \def \showDOI       #1{#1}\fi
\ifx \showISBNx    \undefined \def \showISBNx     #1{\unskip}     \fi
\ifx \showISBNxiii \undefined \def \showISBNxiii  #1{\unskip}     \fi
\ifx \showISSN     \undefined \def \showISSN      #1{\unskip}     \fi
\ifx \showLCCN     \undefined \def \showLCCN      #1{\unskip}     \fi
\ifx \shownote     \undefined \def \shownote      #1{#1}          \fi
\ifx \showarticletitle \undefined \def \showarticletitle #1{#1}   \fi
\ifx \showURL      \undefined \def \showURL       {\relax}        \fi
\providecommand\bibfield[2]{#2}
\providecommand\bibinfo[2]{#2}
\providecommand\natexlab[1]{#1}
\providecommand\showeprint[2][]{arXiv:#2}

\bibitem[Aminian et~al\mbox{.}(2023)]%
        {aminian2023fair}
\bibfield{author}{\bibinfo{person}{Mohammad~Reza Aminian},
  \bibinfo{person}{Vahideh Manshadi}, {and} \bibinfo{person}{Rad Niazadeh}.}
  \bibinfo{year}{2023}\natexlab{}.
\newblock \showarticletitle{Fair Markovian Search}.
\newblock \bibinfo{journal}{\emph{Available at SSRN 4347447}}
  (\bibinfo{year}{2023}).
\newblock


\bibitem[Aouad et~al\mbox{.}(2020)]%
        {aouad2020pandora}
\bibfield{author}{\bibinfo{person}{Ali Aouad}, \bibinfo{person}{Jingwei Ji},
  {and} \bibinfo{person}{Yaron Shaposhnik}.} \bibinfo{year}{2020}\natexlab{}.
\newblock \showarticletitle{The Pandora's Box Problem with Sequential
  Inspections}.
\newblock \bibinfo{journal}{\emph{Available at SSRN 3726167}}
  (\bibinfo{year}{2020}).
\newblock


\bibitem[Armstrong(2017)]%
        {armstrong2017ordered}
\bibfield{author}{\bibinfo{person}{Mark Armstrong}.}
  \bibinfo{year}{2017}\natexlab{}.
\newblock \showarticletitle{Ordered consumer search}.
\newblock \bibinfo{journal}{\emph{Journal of the European Economic
  Association}} \bibinfo{volume}{15}, \bibinfo{number}{5}
  (\bibinfo{year}{2017}), \bibinfo{pages}{989--1024}.
\newblock


\bibitem[Beyhaghi and Cai(2023a)]%
        {beyhaghi2023pandora}
\bibfield{author}{\bibinfo{person}{Hedyeh Beyhaghi} {and}
  \bibinfo{person}{Linda Cai}.} \bibinfo{year}{2023}\natexlab{a}.
\newblock \showarticletitle{Pandora’s problem with nonobligatory inspection:
  Optimal structure and a PTAS}. In \bibinfo{booktitle}{\emph{Proceedings of
  the 55th Annual ACM Symposium on Theory of Computing}}.
  \bibinfo{pages}{803--816}.
\newblock


\bibitem[Beyhaghi and Cai(2023b)]%
        {beyhaghi2023recent}
\bibfield{author}{\bibinfo{person}{Hedyeh Beyhaghi} {and}
  \bibinfo{person}{Linda Cai}.} \bibinfo{year}{2023}\natexlab{b}.
\newblock \showarticletitle{Recent developments in pandora's box problem:
  Variants and applications}.
\newblock \bibinfo{journal}{\emph{ACM SIGecom Exchanges}} \bibinfo{volume}{21},
  \bibinfo{number}{1} (\bibinfo{year}{2023}), \bibinfo{pages}{20--34}.
\newblock


\bibitem[Beyhaghi and Kleinberg(2019)]%
        {beyhaghi2019pandora}
\bibfield{author}{\bibinfo{person}{Hedyeh Beyhaghi} {and}
  \bibinfo{person}{Robert Kleinberg}.} \bibinfo{year}{2019}\natexlab{}.
\newblock \showarticletitle{Pandora's Problem with Nonobligatory Inspection}.
\newblock \bibinfo{journal}{\emph{arXiv preprint arXiv:1905.01428}}
  (\bibinfo{year}{2019}).
\newblock


\bibitem[Bhaskara et~al\mbox{.}(2022)]%
        {bhaskara2022online}
\bibfield{author}{\bibinfo{person}{Aditya Bhaskara}, \bibinfo{person}{Sreenivas
  Gollapudi}, \bibinfo{person}{Sungjin Im}, \bibinfo{person}{Kostas Kollias},
  {and} \bibinfo{person}{Kamesh Munagala}.} \bibinfo{year}{2022}\natexlab{}.
\newblock \showarticletitle{Online Learning and Bandits with Queried Hints}.
\newblock \bibinfo{journal}{\emph{arXiv preprint arXiv:2211.02703}}
  (\bibinfo{year}{2022}).
\newblock


\bibitem[Boodaghians et~al\mbox{.}(2020)]%
        {boodaghians2020pandora}
\bibfield{author}{\bibinfo{person}{Shant Boodaghians},
  \bibinfo{person}{Federico Fusco}, \bibinfo{person}{Philip Lazos}, {and}
  \bibinfo{person}{Stefano Leonardi}.} \bibinfo{year}{2020}\natexlab{}.
\newblock \showarticletitle{Pandora's box problem with order constraints}. In
  \bibinfo{booktitle}{\emph{Proceedings of the 21st ACM Conference on Economics
  and Computation}}. \bibinfo{pages}{439--458}.
\newblock


\bibitem[Brown and Smith(2013)]%
        {brown2013optimal}
\bibfield{author}{\bibinfo{person}{David~B Brown} {and}
  \bibinfo{person}{James~E Smith}.} \bibinfo{year}{2013}\natexlab{}.
\newblock \showarticletitle{Optimal sequential exploration: Bandits,
  clairvoyants, and wildcats}.
\newblock \bibinfo{journal}{\emph{Operations research}} \bibinfo{volume}{61},
  \bibinfo{number}{3} (\bibinfo{year}{2013}), \bibinfo{pages}{644--665}.
\newblock


\bibitem[Chawla et~al\mbox{.}(2024)]%
        {chawla2024combinatorial}
\bibfield{author}{\bibinfo{person}{Shuchi Chawla}, \bibinfo{person}{Dimitris
  Christou}, \bibinfo{person}{Amit Harlev}, {and} \bibinfo{person}{Ziv
  Scully}.} \bibinfo{year}{2024}\natexlab{}.
\newblock \showarticletitle{Combinatorial Selection with Costly Information}.
\newblock \bibinfo{journal}{\emph{arXiv preprint arXiv:2412.03860}}
  (\bibinfo{year}{2024}).
\newblock


\bibitem[Chawla et~al\mbox{.}(2020)]%
        {chawla2020pandora}
\bibfield{author}{\bibinfo{person}{Shuchi Chawla}, \bibinfo{person}{Evangelia
  Gergatsouli}, \bibinfo{person}{Yifeng Teng}, \bibinfo{person}{Christos
  Tzamos}, {and} \bibinfo{person}{Ruimin Zhang}.}
  \bibinfo{year}{2020}\natexlab{}.
\newblock \showarticletitle{Pandora's box with correlations: Learning and
  approximation}. In \bibinfo{booktitle}{\emph{2020 IEEE 61st Annual Symposium
  on Foundations of Computer Science (FOCS)}}. IEEE,
  \bibinfo{pages}{1214--1225}.
\newblock


\bibitem[Derakhshan et~al\mbox{.}(2022)]%
        {derakhshan2022product}
\bibfield{author}{\bibinfo{person}{Mahsa Derakhshan}, \bibinfo{person}{Negin
  Golrezaei}, \bibinfo{person}{Vahideh Manshadi}, {and} \bibinfo{person}{Vahab
  Mirrokni}.} \bibinfo{year}{2022}\natexlab{}.
\newblock \showarticletitle{Product ranking on online platforms}.
\newblock \bibinfo{journal}{\emph{Management Science}} \bibinfo{volume}{68},
  \bibinfo{number}{6} (\bibinfo{year}{2022}), \bibinfo{pages}{4024--4041}.
\newblock


\bibitem[Doval(2018)]%
        {doval2018whether}
\bibfield{author}{\bibinfo{person}{Laura Doval}.}
  \bibinfo{year}{2018}\natexlab{}.
\newblock \showarticletitle{Whether or not to open Pandora's box}.
\newblock \bibinfo{journal}{\emph{Journal of Economic Theory}}
  \bibinfo{volume}{175} (\bibinfo{year}{2018}), \bibinfo{pages}{127--158}.
\newblock


\bibitem[Dumitriu et~al\mbox{.}(2003)]%
        {dumitriu2003playing}
\bibfield{author}{\bibinfo{person}{Ioana Dumitriu}, \bibinfo{person}{Prasad
  Tetali}, {and} \bibinfo{person}{Peter Winkler}.}
  \bibinfo{year}{2003}\natexlab{}.
\newblock \showarticletitle{On playing golf with two balls}.
\newblock \bibinfo{journal}{\emph{SIAM Journal on Discrete Mathematics}}
  \bibinfo{volume}{16}, \bibinfo{number}{4} (\bibinfo{year}{2003}),
  \bibinfo{pages}{604--615}.
\newblock


\bibitem[Fu et~al\mbox{.}(2023)]%
        {fu2023pandora}
\bibfield{author}{\bibinfo{person}{Hu Fu}, \bibinfo{person}{Jiawei Li}, {and}
  \bibinfo{person}{Daogao Liu}.} \bibinfo{year}{2023}\natexlab{}.
\newblock \showarticletitle{Pandora box problem with nonobligatory inspection:
  Hardness and approximation scheme}. In \bibinfo{booktitle}{\emph{Proceedings
  of the 55th Annual ACM Symposium on Theory of Computing}}.
  \bibinfo{pages}{789--802}.
\newblock


\bibitem[Gergatsouli and Tzamos(2022)]%
        {gergatsouli2022online}
\bibfield{author}{\bibinfo{person}{Evangelia Gergatsouli} {and}
  \bibinfo{person}{Christos Tzamos}.} \bibinfo{year}{2022}\natexlab{}.
\newblock \showarticletitle{Online learning for min sum set cover and
  pandora’s box}. In \bibinfo{booktitle}{\emph{International Conference on
  Machine Learning}}. PMLR, \bibinfo{pages}{7382--7403}.
\newblock


\bibitem[Gergatsouli and Tzamos(2023)]%
        {gergatsouli2023weitzman}
\bibfield{author}{\bibinfo{person}{Evangelia Gergatsouli} {and}
  \bibinfo{person}{Christos Tzamos}.} \bibinfo{year}{2023}\natexlab{}.
\newblock \showarticletitle{Weitzman's Rule for Pandora's Box with
  Correlations}.
\newblock \bibinfo{journal}{\emph{arXiv preprint arXiv:2301.13534}}
  (\bibinfo{year}{2023}).
\newblock


\bibitem[Gittins et~al\mbox{.}(2011)]%
        {gittins2011multi}
\bibfield{author}{\bibinfo{person}{John Gittins}, \bibinfo{person}{Kevin
  Glazebrook}, {and} \bibinfo{person}{Richard Weber}.}
  \bibinfo{year}{2011}\natexlab{}.
\newblock \bibinfo{booktitle}{\emph{Multi-armed bandit allocation indices}}.
\newblock \bibinfo{publisher}{John Wiley \& Sons}.
\newblock


\bibitem[Glazebrook(1982)]%
        {glazebrook1982sufficient}
\bibfield{author}{\bibinfo{person}{Kevin~D Glazebrook}.}
  \bibinfo{year}{1982}\natexlab{}.
\newblock \showarticletitle{On a sufficient condition for superprocesses due to
  Whittle}.
\newblock \bibinfo{journal}{\emph{Journal of Applied Probability}}
  \bibinfo{volume}{19}, \bibinfo{number}{1} (\bibinfo{year}{1982}),
  \bibinfo{pages}{99--110}.
\newblock


\bibitem[Guha et~al\mbox{.}(2008)]%
        {guha2008information}
\bibfield{author}{\bibinfo{person}{Sudipto Guha}, \bibinfo{person}{Kamesh
  Munagala}, {and} \bibinfo{person}{Saswati Sarkar}.}
  \bibinfo{year}{2008}\natexlab{}.
\newblock \showarticletitle{Information acquisition and exploitation in
  multichannel wireless networks}.
\newblock \bibinfo{journal}{\emph{arXiv preprint arXiv:0804.1724}}
  (\bibinfo{year}{2008}).
\newblock


\bibitem[Gupta et~al\mbox{.}(2019)]%
        {gupta2019markovian}
\bibfield{author}{\bibinfo{person}{Anupam Gupta}, \bibinfo{person}{Haotian
  Jiang}, \bibinfo{person}{Ziv Scully}, {and} \bibinfo{person}{Sahil Singla}.}
  \bibinfo{year}{2019}\natexlab{}.
\newblock \showarticletitle{The markovian price of information}. In
  \bibinfo{booktitle}{\emph{Integer Programming and Combinatorial Optimization:
  20th International Conference, IPCO 2019, Ann Arbor, MI, USA, May 22-24,
  2019, Proceedings 20}}. Springer, \bibinfo{pages}{233--246}.
\newblock


\bibitem[Hoefer et~al\mbox{.}(2021)]%
        {hoefer2021stochastic}
\bibfield{author}{\bibinfo{person}{Martin Hoefer}, \bibinfo{person}{Kevin
  Schewior}, {and} \bibinfo{person}{Daniel Schmand}.}
  \bibinfo{year}{2021}\natexlab{}.
\newblock \showarticletitle{Stochastic Probing with Increasing Precision}. In
  \bibinfo{booktitle}{\emph{IJCAI}}. \bibinfo{pages}{4069--4075}.
\newblock


\bibitem[Klabjan et~al\mbox{.}(2014)]%
        {klabjan2014attributes}
\bibfield{author}{\bibinfo{person}{Diego Klabjan}, \bibinfo{person}{Wojciech
  Olszewski}, {and} \bibinfo{person}{Asher Wolinsky}.}
  \bibinfo{year}{2014}\natexlab{}.
\newblock \showarticletitle{Attributes}.
\newblock \bibinfo{journal}{\emph{Games and Economic Behavior}}
  \bibinfo{volume}{88} (\bibinfo{year}{2014}), \bibinfo{pages}{190--206}.
\newblock


\bibitem[Kleinberg et~al\mbox{.}(2016)]%
        {kleinberg2016descending}
\bibfield{author}{\bibinfo{person}{Robert Kleinberg}, \bibinfo{person}{Bo
  Waggoner}, {and} \bibinfo{person}{E~Glen Weyl}.}
  \bibinfo{year}{2016}\natexlab{}.
\newblock \showarticletitle{Descending price optimally coordinates search}.
\newblock \bibinfo{journal}{\emph{arXiv preprint arXiv:1603.07682}}
  (\bibinfo{year}{2016}).
\newblock


\bibitem[Singla(2018)]%
        {singla2018price}
\bibfield{author}{\bibinfo{person}{Sahil Singla}.}
  \bibinfo{year}{2018}\natexlab{}.
\newblock \showarticletitle{The price of information in combinatorial
  optimization}. In \bibinfo{booktitle}{\emph{Proceedings of the twenty-ninth
  annual ACM-SIAM symposium on discrete algorithms}}. SIAM,
  \bibinfo{pages}{2523--2532}.
\newblock


\bibitem[Ursu et~al\mbox{.}(2023)]%
        {ursu2023sequential}
\bibfield{author}{\bibinfo{person}{Raluca Ursu}, \bibinfo{person}{Stephan
  Seiler}, {and} \bibinfo{person}{Elisabeth Honka}.}
  \bibinfo{year}{2023}\natexlab{}.
\newblock \showarticletitle{The Sequential Search Model: A Framework for
  Empirical Research}.
\newblock  (\bibinfo{year}{2023}).
\newblock


\bibitem[Weber(1992)]%
        {weber1992gittins}
\bibfield{author}{\bibinfo{person}{Richard Weber}.}
  \bibinfo{year}{1992}\natexlab{}.
\newblock \showarticletitle{On the Gittins index for multiarmed bandits}.
\newblock \bibinfo{journal}{\emph{The Annals of Applied Probability}}
  (\bibinfo{year}{1992}), \bibinfo{pages}{1024--1033}.
\newblock


\bibitem[Weitzman(1979)]%
        {weitzman1979optimal}
\bibfield{author}{\bibinfo{person}{Martin~L Weitzman}.}
  \bibinfo{year}{1979}\natexlab{}.
\newblock \showarticletitle{Optimal search for the best alternative}.
\newblock \bibinfo{journal}{\emph{Econometrica: Journal of the Econometric
  Society}} (\bibinfo{year}{1979}), \bibinfo{pages}{641--654}.
\newblock


\bibitem[Whittle(1980)]%
        {whittle1980multi}
\bibfield{author}{\bibinfo{person}{Peter Whittle}.}
  \bibinfo{year}{1980}\natexlab{}.
\newblock \showarticletitle{Multi-armed bandits and the Gittins index}.
\newblock \bibinfo{journal}{\emph{Journal of the Royal Statistical Society:
  Series B (Methodological)}} \bibinfo{volume}{42}, \bibinfo{number}{2}
  (\bibinfo{year}{1980}), \bibinfo{pages}{143--149}.
\newblock


\end{thebibliography}

\newpage
\appendix

\section{Deferred proofs}\label{appendix:omitted}

\restate*{lemma:min-r-surrogate}
\begin{proof}
  \label{proof:min-r-surrogate-opt}

  The \mandator\ inspection statement is standard (see, e.g., \citealp[Lemma 1]{kleinberg2016descending}), so we turn immediately to the \optiona\ inspection statement. We consider two cases:

  \begin{case}[$\rv < \bv$]
    Expanding both sides using \cref{definition:surrogate, definition:surrogate-optional}, we aim to show
    \[
      \label{eq:surrogate-opt-vs-surrogate:goal}
      \E[\big]{\min\{\max\{\price, \rv\}, \bv, r\}}
      = \min\curlgp[\big]{\E[\big]{\min\{\max\{\price, \rv\}, r\}}, \mean}.
    \]
    We clearly have
    \[
      \label{eq:surrogate-opt-vs-surrogate:r_small}
      \E[\big]{\min\{\max\{\price, \rv\}, \bv, r\}}
      \leq \E[\big]{\min\{\max\{\price, \rv\}, r\}},
    \]
    and using \cref{definition:surrogate, definition:surrogate-optional} and the $\rv < \bv$ assumption, we compute
    \[
      \E[\big]{\min\{\max\{\price, \rv\}, \bv, r\}}
      \label{eq:surrogate-opt-vs-surrogate:r_large}
      &\leq \E[\big]{\min\{\max\{\price, \rv\}, \bv\}} \\
      &= \meanitem + \E{(\rv - \price)^+} - \E{(\price - \bv)^+} \\
      &= \mean + \inspcost - \inspcost= \mean.
    \]
    One of \cref{eq:surrogate-opt-vs-surrogate:r_small} or \cref{eq:surrogate-opt-vs-surrogate:r_large} holds with equality (because $r \leq \bv$ or $r \geq \bv$), implying \cref{eq:surrogate-opt-vs-surrogate:goal}.
  \end{case}

  \begin{case}[$\rv \geq \bv$]
    Expanding both sides using \cref{definition:surrogate-optional}, we aim to show
    \[
      \min\{\mean, r\} = \min\curlgp[\big]{\mean, \E{\min\{\surrogate, r\}}}.
    \]
    The left-hand side is greater than or equal to the right-hand side, so it remains only to show the reverse inequality. By \cref{definition:surrogate}, we have $\rv \leq \surrogate$ with probability~$1$, so it suffices to show $\mean \leq \rv$. This holds due to the $\rv \geq \bv$ assumption, \cref{eq:reserve, eq:backup}, and the following computation:
    \[
      \mean = \E{\price}
      &= \rv + \E{(\price - \rv)^+} - \E{(\rv - \price)^+} \\
      &\leq \rv + \E{(\price - \bv)^+} - \E{(\rv - \price)^+} \\
      &= \rv + \inspcost - \inspcost = \rv.
      \qedhere
    \]
  \end{case}
\end{proof}

\restate*{theorem:cost-opt}

\begin{proof}
  \label{proof:cost-opt}
  Consider an arbitrary policy~$\policy$ for the \dm. After $t$ rounds, the policy has inspected some \items\ and possibly selected one. Let\footnote{%
    Below, we allow $t$ to be greater than the number of rounds that $\policy$ takes to select an \iitem. We follow the convention that the process simply remains static after an \iitem\ is selected.}
  \[
    \realizedcost(t) &= \textnormal{total inspection and selection cost paid by $\policy$ during $\{0, \dots, t - 1\}$,} \\
    \surrogatebaseitem(t) &=
    \begin{cases}
      0 & \textnormal{if any \iitem\ is selected by $\policy$ during $\{0, \dots, t - 1\}$} \\
      \priceitem & \textnormal{if $\itemindex$ is inspected, but not selected, by $\policy$ during $\{0, \dots, t - 1\}$} \\
      \surrogateoptitem & \textnormal{otherwise,}
    \end{cases} \\
    \filtration(t) &= \textnormal{information $\policy$ gains from inspections during $\{0, \dots, t - 1\}$,} \\
    K(t) &= \realizedcost(t) + \E*{\min_{\itemindex \in \itemsset} \surrogatebaseitem(t) \given \filtration(t)}.
  \]
  After $\numitems + 1$ rounds, the \dm\ will have selected an \iitem\ and the process will have terminated, so $K(\numitems + 1) = \policycost$. We also have $K(0) = \E{\min_{\itemindex \in \itemsset} \surrogateopt}$, so it suffices to show that $\{K(t)\}_t$ is a submartingale with respect to $\{\filtration(t)\}_t$.

  Below, to reduce clutter, we abbreviate $\E{\cdot \given \filtration(t)}$ to $\E_t{\cdot}$.

  We aim to show $K(t) \leq \E_t{K(t + 1)}$. To do so, we consider each action the \dm\ might take on each \iitem\ $m \in \itemsset$. For each action, we write the difference $\E_t{K(t + 1)} - K(t)$ in terms of the quantity
  \[
    \SurMin(t) = \min_{\itemindex \in \itemsset \setminus \{m\}} \surrogatebaseitem(t).
  \]
  The fact that $\SurMin(t)$ depends only on $\surrogatebaseitem(t)$ for $\itemindex \neq m$ implies the following observations.
  \*[(O1)] If the \dm\ takes an action on item~$m$ at time~$t$, then $\SurMin(t) = \SurMin(t + 1)$.
  \* $\SurMin(t)$ is conditionally independent of $\surrogatebaseitem[m](t)$ given $\filtration(t)$.
  \*/

  With the appropriate notation and the above observations in hand, we can show the difference $\E_t{K(t + 1)} - K(t)$ is nonnegative no matter which action the \dm\ takes.
  \* If \iitem~$m$ is closed and the \dm\ inspects it, then
  \[
    \E_t{K(t + 1)} - K(t)
    &= -\E_t{\min\{\surrogateoptitem[m], \SurMin(t)\}} + \inspcostitem[m] + \E_t{\min\{\priceitem[m], \SurMin(t + 1)\}} \\
    &= -\E_t{\min\{\surrogateoptitem[m], \SurMin(t)\}} + \inspcostitem[m] + \E_t{\min\{\priceitem[m], \SurMin(t)\}}.
  \]
  The second equality follows by~(O1), and its right-hand side is nonnegative by (O2) and \cref{lemma:min-r-surrogate}.
  \* If \iitem~$m$ is closed and the \dm\ selects it (without inspection), then
  \[
    \E_t{K(t + 1)} - K(t)
    &= -\E_t{\min\{\surrogateoptitem[m], \SurMin(t + 1)\}} + \meanitem[m] \\
    &= -\E_t{\min\{\surrogateoptitem[m], \SurMin(t)\}} + \meanitem[m].
  \]
  Again, the second equality follows by~(O1), and its right-hand side is nonnegative by (O2) and \cref{lemma:min-r-surrogate}.
  \* If \iitem~$m$ is open and the \dm\ selects it, then
  \[
    \E_t{K(t + 1)} - K(t)
    = -\E_t{\min\{\priceitem[m], \SurMin(t)\}} + \priceitem[m].
  \]
  This is nonnegative because $\priceitem[m]$ is known to the \dm\ at time~$t$, so $\priceitem[m] = \E_t{\priceitem[m]}$.
  \qedhere
  \*/
\end{proof}

\restate*{theorem:combinatorial-cost-opt}

\begin{proof}
  This proof follows essentially the same steps as the proof of \cref{theorem:cost-opt} above, but with some extra complications to handle the combinatorial aspect.

  Consider an arbitrary policy~$\policy$ for the \dm. After $t$ rounds, the policy has inspected some \items\ and possibly selected one. Let\footnote{%
    Below, we allow $t$ to be greater than the number of rounds that $\policy$ takes to select an admissible set of \items. We follow the convention that the process simply remains static after such a set is selected.}
  \[
    \realizedcost(t) &= \textnormal{total inspection and selection cost paid by $\policy$ during $\{0, \dots, t - 1\}$,} \\
    \surrogatebaseitem(t) &=
    \begin{cases}
      0 & \textnormal{if any \iitem\ is selected by $\policy$ during $\{0, \dots, t - 1\}$} \\
      \priceitem & \textnormal{if $\itemindex$ is inspected, but not selected, by $\policy$ during $\{0, \dots, t - 1\}$} \\
      \surrogateoptitem & \textnormal{otherwise,}
    \end{cases} \\
    \filtration(t) &= \textnormal{information $\policy$ gains from inspections during $\{0, \dots, t - 1\}$,} \\
    \constraints(t) &= \curlgp[\big]{\selectedset \in \constraints \given \textnormal{$\selectedset$ contains all \items\ selected by $\pi$ during $\{0, \dots, t - 1\}$}} \\
    \realizedcostsurrogatebase(t) &= \min_{\selectedset \in \constraints(t)}{} \gp*{\sum_{\itemindex \in \selectedset} \surrogatebaseitem(t) + \terminalcost(\selectedset)}, \\
    K(t) &= \realizedcost(t) + \E{\realizedcostsurrogatebase(t) \given \filtration(t)}.
  \]
  After $\numitems + 1$ rounds, the \dm\ will have selected an \iitem\ and the process will have terminated, so $K(\numitems + 1) = \policycost$. We also have $K(0) = \E{\min_{\itemindex \in \itemsset} \surrogateopt}$, so it suffices to show that $\{K(t)\}_t$ is a submartingale with respect to $\{\filtration(t)\}_t$.

  Below, to reduce clutter, we abbreviate $\E{\cdot\! \given \filtration(t)}$ to $\E_t{\cdot}$.

  We aim to show $K(t) \leq \E_t{K(t + 1)}$. To do so, we consider each action the \dm\ might take on each \iitem\ $m \in \itemsset$. For each action, write the difference $\E_t{K(t + 1)} - K(t)$ in terms of the quantities
  \[
    \SurMinWithout(t) &= \min_{\selectedset \in \constraints(t) \,|\, m \not\in \selectedset}{} \gp*{
      \sum_{\itemindex \in \selectedset} \surrogatebaseitem(t)
      + \terminalcost(\selectedset)
    },
    \\
    \SurMinWith(t) &= \min_{\selectedset \in \constraints(t) \,|\, m \in \selectedset}{} \gp[\Bigg]{
      \mkern0.75mu \smashoperator[r]{\sum_{\itemindex \in \selectedset \setminus \{m\}}}
      \surrogatebaseitem(t) + \terminalcost(\selectedset)
    },
    \\
    \SurMin(t) &= \SurMinWithout(t) - \SurMinWith(t).
  \]
  We use the convention that a minimum over an empty set is~$\infty$. That is, if all sets in $\constraints(t)$ contain $m$, then $\SurMinWithout(t) = \infty$, and similarly if none contain~$m$, then $\SurMinWith(t) = \infty$.\footnote{%
    Provided the policy~$\policy$ never fails to select a feasible set of items, it will never be the case that both $\SurMinWithout(t)$ and $\SurMinWith(t)$ are infinite.}

  These quantities above give a useful decomposition of $\realizedcostsurrogatebase(t)$, namely that for any \iitem~$m$,
  \[
    \realizedcostsurrogatebase(t)
    &= \min\{\SurMinWithout(t), \SurMinWith(t) + \surrogatebaseitem[m](t)\} \\
    &= \SurMinWith(t) + \min\{\surrogatebaseitem[m](t), \SurMin(t)\}.
  \]
  The fact that $\SurMinWith(t)$ and $\SurMin(t)$ depend only on $\surrogatebaseitem(t)$ for $\itemindex \neq m$ implies the following observations.
  \*[(O1)] If the \dm\ takes an action on item~$m$ at time~$t$, then $\SurMinWithout(t) \leq \SurMinWithout(t + 1)$ and $\SurMinWith(t) = \SurMinWith(t + 1)$, so $\SurMin(t) \leq \SurMin(t + 1)$.
  \* $\SurMin(t)$ is conditionally independent of $\surrogatebaseitem[m](t)$ given $\filtration(t)$.
  \*/

  With the appropriate notation and the above observations in hand, we can show the difference $\E_t{K(t + 1)} - K(t)$ is nonnegative no matter which action the \dm\ takes.
  \* If \iitem~$m$ is closed and the \dm\ inspects it, then
  \[
    \MoveEqLeft
    \E_t{K(t + 1)} - K(t) \\*
    &= -\E_t{\SurMinWith(t) + \min\{\surrogateoptitem[m], \SurMin(t)\}} + \inspcostitem[m] + \E_t{\SurMinWith(t + 1) + \min\{\priceitem[m], \SurMin(t + 1)\}} \\
    &\geq -\E_t{\min\{\surrogateoptitem[m], \SurMin(t)\}} + \inspcostitem[m] + \E_t{\min\{\priceitem[m], \SurMin(t)\}}.
  \]
  The inequality follows by~(O1), and its right-hand side is nonnegative by (O2) and \cref{lemma:min-r-surrogate}.
  \* If \iitem~$m$ is closed and the \dm\ selects it (without inspection), then
  \[
    \MoveEqLeft
    \E_t{K(t + 1)} - K(t) \\*
    &= -\E_t{\SurMinWith(t) + \min\{\surrogateoptitem[m], \SurMin(t)\}} + \E_t{\SurMinWith(t + 1) + \min\{\meanitem[m], \infty\}} \\
    &= -\E_t{\min\{\surrogateoptitem[m], \SurMin(t)\}} + \meanitem[m].
  \]
  The second equality follows by~(O1), and its right-hand side is nonnegative by (O2) and \cref{lemma:min-r-surrogate}. The infinity appears because $\SurMinWithout(t + 1) = \infty$ due to $m$ being selected at time~$t$.
  \* If \iitem~$m$ is open and the \dm\ selects it, then
  \[
    \MoveEqLeft
    \E_t{K(t + 1)} - K(t) \\*
    &= -\E_t{\SurMinWith(t) + \min\{\surrogateoptitem[m], \SurMin(t)\}} + \E_t{\SurMinWith(t + 1) + \min\{\priceitem[m], \infty\}} \\
    &= -\E_t{\min\{\surrogateoptitem[m], \SurMin(t)\}} + \priceitem[m].
  \]
  The second equality follows by~(O1), and its right-hand side is nonnegative because $\priceitem[m]$ is known to the \dm\ at time~$t$, so $\priceitem[m] = \E_t{\priceitem[m]}$. Again, the infinity appears because $\SurMinWithout(t + 1) = \infty$ due to $m$ being selected at time~$t$.
  \qedhere
  \*/
\end{proof}

\restate*{theorem:lh-combinatorial}
\begin{proof}
  We actually show the following stronger result: if for each item, a local hedging probability exists giving a local $\ratio$-approximation (or better), then using those same local hedging probabilities makes $\hedgepi$ a $\ratio \approxcombo$-approximation. This implies the theorem because by \cref{theorem:lh}, such local hedging probabilities exist for some $\alpha < 4/3$.

  Because the second stage of $\hedgepi$ runs $\policy$, the fact that $\E{\policycost} \leq \approxcombo \E{\realizedcostsurrogate}$ implies
  \[
    \E{\policycost[\hedgepi]} \leq \approxcombo \E{\realizedcostsurrogatehedge}.
  \]
  Combining this observation with \cref{theorem:combinatorial-cost-opt}, it suffices to show that for appropriately chosen local hedging probabilities, we have
  \[
    \label{eq:lh-combinatorial-goal}
    \E{\realizedcostsurrogatehedge} \leq \ratio \E{\realizedcostsurrogateopt}
  \]

  Before proving \cref{eq:lh-combinatorial-goal} formally, let us outline the main idea, which is essentially a generalization of the proof of \cref{theorem:lh-pandora}. Starting from $\realizedcostsurrogatehedge$ and replacing each item's \hedge-surrogate price with $\ratio$ times its \opt-surrogate price one by one, resulting in $\ratio \realizedcostsurrogateopt$. Because local hedging gives a local $\ratio$-approximation (\cref{definition:local-ratio-approx}) for each item, each replacement only increases the expected value.

  To formalize the above outline, we need notation for describing the replacement of surrogate prices one by one. To that end, let
  \[
    \surrogatebaseitem^{(m)}
    &= \begin{cases}
         \ratio \surrogateoptitem & \textnormal{if } \itemindex \leq m \\
         \surrogatehedgeitem & \textnormal{if } \itemindex > m,
       \end{cases}
    \\
    \realizedcostsurrogatebase^{(m)}
    &= \min_{\selectedset \in \constraints}{} \gp*{\sum_{\itemindex \in \selectedset} \surrogatebaseitem^{(m)} + \terminalcost(\selectedset)}.
  \]
  Then it suffices to show $\E{\realizedcostsurrogatebase^{(m - 1)}} \leq \E{\realizedcostsurrogatebase^{(m)}}$, as then
  \[
    \E{\realizedcostsurrogatehedge}
    = \E{\realizedcostsurrogatebase^{(0)}}
    \leq \dots
    \leq \E{\realizedcostsurrogatebase^{(\numitems)}}
    = \ratio \E{\realizedcostsurrogateopt},
  \]
  where the equalities at either end follow from \cref{definition:realized-cost-surrogate}.

  To show $\E{\realizedcostsurrogatebase^{(m - 1)}} \leq \E{\realizedcostsurrogatebase^{(m)}}$, the key idea is to split the minimization over $\itemsset \in \constraints$ into cases based on whether \iitem~$m$ is in the optimizing set. We therefore define
  \[
    \SurMinWithout &= \min_{\selectedset \in \constraints \,|\, m \not\in \selectedset}{} \gp*{
      \sum_{\itemindex \in \selectedset} \surrogatebaseitem^{(m)}
      + \terminalcost(\selectedset)
    },
    \\
    \SurMinWith &= \min_{\selectedset \in \constraints \,|\, m \in \selectedset}{} \gp[\Bigg]{
      \mkern0.75mu \smashoperator[r]{\sum_{\itemindex \in \selectedset \setminus \{m\}}}
      \surrogatebaseitem^{(m)} + \terminalcost(\selectedset)
    }.
  \]
  Because $\surrogatebaseitem^{(m - 1)} = \surrogatebaseitem^{(m)}$ for all $\itemindex \neq m$, we have
  \[
    \realizedcostsurrogatebase^{(m - 1)} &= \min\{\SurMinWithout, \SurMinWith + \surrogatehedgeitem[m]\}, \\
    \realizedcostsurrogatebase^{(m)} &= \min\{\SurMinWithout, \SurMinWith + \alpha \surrogateoptitem[m]\}.
  \]
  Note also that $\SurMinWithout$ and $\SurMinWith$ are independent of $\surrogatehedgeitem[m]$ and $\surrogateoptitem[m]$, as they depend on the surrogate prices of all \items\ \emph{except} \iitem~$m$. Using this and the fact that local hedging gives a local $\alpha$-approximation for \iitem~$m$ (\cref{definition:local-ratio-approx}), we compute
  \[
    \E{\realizedcostsurrogatebase^{(m - 1)} \given \SurMinWithout, \SurMinWith}
    &= \SurMinWith + \E{\min\{\surrogatehedgeitem[m], \SurMinWithout - \SurMinWith\} \given \SurMinWithout, \SurMinWith} \\
    &\leq \SurMinWith + \E{\min\{\alpha \surrogateoptitem[m], \SurMinWithout - \SurMinWith\} \given \SurMinWithout, \SurMinWith} \\
    &= \E{\realizedcostsurrogatebase^{(m)} \given \SurMinWithout, \SurMinWith},
  \]
  which implies $\E{\realizedcostsurrogatebase^{(m - 1)}} \leq \E{\realizedcostsurrogatebase^{(m)}}$, as desired.
\end{proof}

\section{Alternative proof of the lower bound for single-item selection}
\label{appendix:beyhaghi}

In this section, we give an alternative proof of \cref{theorem:cost-opt} by reducing it to a result of \citet{beyhaghi2019pandora}, which is a tighter lower bound on $\inf_\policy \E{\policycost}$. The reason we do not directly use \citeauthor{beyhaghi2019pandora}'s result for our proof is that, as will soon become clear, their result is less explicit than \cref{theorem:cost-opt}. Rather than a single expression that works for all policies~$\policy$, their result bounds $\E{\policycost}$ using an expression which itself depends on~$\policy$. But it is instructive to present their result and show how it implies our weaker but more explicit bound.

We note that although \citet{beyhaghi2019pandora} consider the reward-maximization setting instead of the cost-minimization setting, the definitions and proofs we consider in this section easily translate to the cost setting. We believe \citeauthor{beyhaghi2019pandora}'s approach could also be used to obtain a policy-dependent version of \cref{theorem:combinatorial-cost-opt}, but this is less immediate.

The main idea behind the approach \citet{beyhaghi2019pandora} take is this: to prove a policy-dependent bound, one should use a \emph{policy-dependent analogue of surrogate prices}. One way of doing so is the following.

\begin{definition}
  \label{definition:surrogate-pi}
  Let $\policy$ be a policy for the \optiona\ inspection problem. The \emph{$\policy$-surrogate price} of \iitem~$\itemindex$ is the random variable
  \[
    \surrogatebaseitem^\policy = \begin{cases}
      \surrogateitem & \textnormal{if $\policy$ inspects item~$\itemindex$} \\
      \mean & \textnormal{if $\policy$ does not inspect item~$\itemindex$}.
    \end{cases}
  \]
\end{definition}

Unlike the other types of surrogate prices we consider, two different \items' $\policy$-surrogate prices are \emph{not necessarily independent}. This is because the hidden price revealed by inspecting one \iitem\ may influence the decision of whether to inspect another.\footnote{%
  This is true even under local hedging, which uses independent randomness to commit to inspect-before-select or no-inspect for each item. The issue is that local hedging might not inspect an \iitem\ even if it is inspect-before-select, and this decision depends on the hidden prices of other \items.}
Despite this subtlety, \citet{beyhaghi2019pandora} use $\policy$-surrogate prices to prove the following bound.

\begin{proposition}[{cost analogue of \citet[Lemma~16]{beyhaghi2019pandora}}]
  \label{proposition:beyhaghi-bound}
  In the \optiona\ inspection problem, the expected total of policy~$\policy$ satisfies
  \[
    \E{\policycost} \geq \E*{\min_{\itemindex \in \itemsset} \surrogatebaseitem^\policy}.
  \]
\end{proposition}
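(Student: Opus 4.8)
The plan is to adapt the submartingale argument behind \cref{theorem:cost-opt}, with one essential modification: the surrogate prices attached to uninspected \items\ become \emph{policy-dependent}. First I would reduce to deterministic $\policy$, since a randomized policy is a mixture of deterministic ones and, conditioning on its internal coins, the event ``$\policy$ inspects \iitem~$\itemindex$'' is well defined, so both $\E{\policycost}$ and $\E{\min_{\itemindex\in\itemsset}\surrogatebase_\itemindex}$ are the corresponding mixtures of their deterministic-policy values.

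Fixing a deterministic $\policy$, I would reuse the scaffolding of \cref{theorem:cost-opt} --- $\realizedcost(t)$ is the cost paid through round $t-1$, $\filtration(t)$ the information from inspections through round $t-1$, $K(t) = \realizedcost(t) + \E{\min_{\itemindex\in\itemsset}\surrogatebase_\itemindex(t)\given\filtration(t)}$, with the same conventions once $\policy$ stops --- but redefine the time-$t$ surrogate price to track $\policy$'s eventual behaviour:
\[
  \surrogatebase_\itemindex(t) = \begin{cases}
    0 & \textnormal{if $\policy$ has selected some \iitem\ by round $t$,}\\
    \priceitem & \textnormal{if $\itemindex$ has been inspected but nothing is selected,}\\
    \surrogateitem & \textnormal{if $\itemindex$ is uninspected but $\policy$ inspects it at some later round,}\\
    \meanitem & \textnormal{if $\itemindex$ is uninspected and $\policy$ never inspects it.}
  \end{cases}
\]
At $t=0$ every \iitem\ falls under one of the last two branches, so $\min_{\itemindex\in\itemsset}\surrogatebase_\itemindex(0)$ is exactly the $\policy$-surrogate minimum of \cref{definition:surrogate-pi} and $K(0) = \E{\min_{\itemindex\in\itemsset}\surrogatebase_\itemindex}$; once $\policy$ has stopped, $K(t) = \policycost$. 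Thus it suffices to prove that $\{K(t)\}_t$ is a submartingale.

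For that I would repeat the action-by-action case analysis of \cref{theorem:cost-opt}, writing each increment in terms of $R_{\neq m}(t) = \min_{\itemindex\neq m}\surrogatebase_\itemindex(t)$, which is pathwise unchanged by any action on \iitem~$m$ that does not stop the process. The two ``select $m$'' branches stop the process (so $R_{\neq m}(t+1) = 0$) and are dispatched directly using $\min\{x,r\}\le x$ and $\E{\price_m\given\filtration(t)} = \mean_m$ for an uninspected \iitem, exactly as before. The genuinely new difficulty is the ``$\policy$ inspects $m$ at round $t$'' branch: the proof of \cref{theorem:cost-opt} there appeals to conditional independence of $R_{\neq m}(t)$ and \iitem~$m$'s surrogate price given $\filtration(t)$, and that \emph{fails} here, because whether $\policy$ later inspects some other \iitem\ may depend on $\price_m$. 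I would avoid the independence altogether: on this branch $\surrogatebase_m(t) = \surrogate_m = \max\{\price_m,\rv_m\}$, and the elementary pointwise bound $\min\{\max\{v,w\},r\}\le\min\{v,r\}+(w-v)^+$ gives
\[
  \min\{\surrogate_m,\,R_{\neq m}(t)\}\ \le\ \min\{\price_m,\,R_{\neq m}(t)\} + (\rv_m-\price_m)^+
\]
pathwise, regardless of how $R_{\neq m}(t)$ is correlated with $\price_m$; taking $\E{\cdot\given\filtration(t)}$ and using that $\price_m$ is independent of $\filtration(t)$ (since \iitem~$m$ is uninspected at time $t$), so that $\E{(\rv_m-\price_m)^+\given\filtration(t)} = \inspcost_m$ by the defining relation~\cref{eq:reserve}, yields $\E{\min\{\surrogate_m,R_{\neq m}(t)\}\given\filtration(t)}\le\inspcost_m+\E{\min\{\price_m,R_{\neq m}(t)\}\given\filtration(t)}$ --- exactly the inequality that makes the ``inspect $m$'' increment nonnegative. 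Then $\E{\policycost} = \E{K(\numitems+1)}\ge\E{K(0)} = \E{\min_{\itemindex\in\itemsset}\surrogatebase_\itemindex}$.

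I expect the main obstacle to be precisely this loss of conditional independence, which is forced by the policy-dependence of the surrogate prices. Once one notices that the independence is used in only one case and can be replaced there by the pointwise domination above together with a single use of~\cref{eq:reserve}, the remainder is a routine transcription of the proof of \cref{theorem:cost-opt}.
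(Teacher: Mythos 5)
The paper does not give its own proof of \cref{proposition:beyhaghi-bound}: it imports the result from \citet[Lemma~16]{beyhaghi2019pandora}, so there is no paper proof to compare against. Your proof is a valid independent derivation and is correct.

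You correctly identify the crux. The paper's submartingale proof of \cref{theorem:cost-opt} relies on observation (O2), that $R_{\neq m}(t)$ is conditionally independent of the item-$m$ surrogate price given $\filtration(t)$; this is what lets that proof invoke \cref{lemma:min-r-surrogate} with $R_{\neq m}(t)$ standing in for a fixed outside option. With $\policy$-surrogate prices this independence genuinely fails: once item~$m$ is inspected, $\policy$'s later choices about which other \items\ to inspect, and hence their $\policy$-surrogate prices and $R_{\neq m}(t)$, can depend on $\price_m$. Your workaround --- the pointwise inequality $\min\{\max\{v,w\},r\} \le \min\{v,r\} + (w-v)^+$, applied with $v=\price_m$, $w=\rvitem$, $r=R_{\neq m}(t)$, followed by $\E{(\rvitem-\price_m)^+\given\filtration(t)} = \inspcostitem$ using only the independence of $\price_m$ from $\filtration(t)$ and \cref{eq:reserve} --- is exactly the right substitute, since it makes no independence demand on $R_{\neq m}(t)$. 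The other two cases (select inspected, select uninspected) never used the joint structure of $R_{\neq m}(t)$ and $\surrogatebase_m(t)$, only $\min\{x,r\}\le x$ and $\E{\price_m\given\filtration(t)}=\meanitem$, so they transcribe unchanged. Your reduction from randomized to deterministic $\policy$, and the boundary checks $K(0)=\E{\min_\itemindex \surrogatebase_\itemindex}$ and $K(\numitems+1)=\policycost$, are also correct. One small clarification worth making explicit: in the third branch of your time-$t$ surrogate, ``inspects it at some later round'' should be read as ``at some round $\ge t$,'' so that when $\policy$ inspects $m$ precisely at round $t$ you indeed have $\surrogatebase_m(t)=\surrogate_m$; with that reading the case analysis closes.
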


As we show below, this result implies \cref{theorem:cost-opt}.

\begin{proof}[Alternative proof of \cref{theorem:cost-opt}]
  In light of \cref{proposition:beyhaghi-bound}, it suffices to show
  \[
    \label{eq:cost-opt-goal-via-beyhaghi-bound}
    \E*{\min_{\itemindex \in \itemsset} \surrogatebaseitem^\policy}
    \geq \E*{\min_{\itemindex \in \itemsset} \surrogateoptitem}.
  \]
  Similarly to the proof of \cref{theorem:cost-opt}, we show this by replacing $\policy$-surrogate prices with \opt-surrogate prices one-by-one. Specifically, it suffices to show that for each $\itemindex \in \itemsset$, we have
  \[
    \E{\min\curlgp{\dots, \surrogatebase_{\itemindex - 1}^\policy, \surrogatebaseitem^\policy, \surrogatebase_{\itemindex + 1}^\opt, \dots}}
    \geq \E{\min\curlgp{\dots, \surrogatebase_{\itemindex - 1}^\policy, \surrogatebaseitem^\opt, \surrogatebase_{\itemindex + 1}^\opt, \dots}},
  \]
  as chaining these inequalities together for all $n \in \itemsset$ yields \cref{eq:cost-opt-goal-via-beyhaghi-bound}. The above holds if for all $r \in \reals$,
  \[
    \E{\min\{\surrogatebaseitem^\policy, r\}}
    \geq \E{\min\{\surrogateoptitem, r\}}.
  \]
  By \cref{definition:surrogate-pi, lemma:min-r-surrogate}, we have, as desired,
  \[
    \E{\min\{\surrogatebaseitem^\policy, r\}}
    &\geq \min\curlgp[\big]{\E{\min\{\surrogateitem, r\}}, \min\{\meanitem, r\}} \\
    &= \min\curlgp[\big]{\E{\min\{\surrogateitem, r\}}, \meanitem} \\
    &= \E{\min\{\surrogateoptitem, r\}}.
    \qedhere
  \]
\end{proof}

\end{document}